\newcolumntype{L}[1]{>{\raggedright\arraybackslash}p{#1}}
\newcolumntype{C}[1]{>{\centering\arraybackslash}p{#1}}
\newcolumntype{R}[1]{>{\raggedleft\arraybackslash}p{#1}}
 \def\myendproof{{\ \vbox{\hrule\hbox{%
   \vrule height1.3ex\hskip0.8ex\vrule}\hrule }}\par}
 \newenvironment{proofAppendix}[1]{\noindent{\bf Proof of #1. }}{\hfill\myendproof}
\newcommand{\floor}[1]{\left\lfloor{#1}\right\rfloor}
\newcommand{\ceiling}[1]{\left\lceil{#1}\right\rceil}
\newcommand{\setof}[1]{\left\{{#1}\right\}}
\newcommand{\resourceArbM}{3-\frac{1}{M}}
\tikzset{
    task/.style={shade, shading=radial, rectangle,minimum height=.1cm,
        inner color=#1!20, outer color=#1!60!gray},
    task1/.style={task=yellow, minimum width=13mm},
    task2/.style={task=orange, minimum width=13mm},
    task3/.style={task=red, minimum width=13mm},
    task4/.style={task=green, minimum width=13mm},
    task5/.style={task=blue, minimum width=13mm},
    task6/.style={task=purple, minimum width=13mm},
    task7/.style={task=cyan, minimum width=13mm},
    task8/.style={task=pink, minimum width=13mm},
}
\newcommand{\algorithmicinput}{\textbf{Input:}}
\newcommand{\INPUT}{\item[\algorithmicinput]}
\tikzstyle{entity} = [top color=white, bottom color=red!40,
\tikzstyle{relationship} = [diamond, shape aspect=2, top color=white, bottom color=red!20,
\tikzstyle{end} = [top color=white, bottom color=blue!30,
\tikzstyle{max} = [top color=white, bottom color=orange!60,
\title{Packing Sporadic Real-Time Tasks on Identical
  Multiprocessor Systems}
\titlerunning{Packing Sporadic Real-Time Tasks on Identical
  Multiprocessor Systems}
\author{Jian-Jia Chen}{Department of Computer Science, TU Dortmund University,
Germany}{jian-jia.chen@cs.uni-dortmund.de}{0000-0001-8114-9760}{}
\author{Nikhil Bansal}{Eindhoven University of Technology, the Netherlands}{n.bansal@tue.nl}{}{} 
\author{Samarjit Chakraborty}{Technical University of Munich (TUM), Germany}{samarjit@tum.de}{0000-0002-0503-6235}{} 
\author{Georg von der Br\"uggen}{Department of Computer Science, TU Dortmund
University, Germany}{georg.von-der-brueggen@tu-dortmund.de}{0000-0002-8137-3612}{}
\authorrunning{J.-J. Chen, N. Bansal, S. Chakraborty, G. von der
Br\"uggen}
\keywords{multiprocessor partitioned scheduling, approximation factors}
\subjclass{\ccsdesc[500]{Computer systems organization~Real-time systems}}
\begin{document}

\maketitle

\begin{abstract}
In real-time systems, in addition to the functional correctness recurrent tasks must fulfill timing constraints to ensure the correct behavior of the system. Partitioned scheduling is widely used in real-time systems, i.e., the tasks are statically assigned onto processors while ensuring that all  timing constraints are met. The decision version of the problem, which is to check whether the deadline constraints of tasks can be satisfied on a given number of identical processors, has been known ${\cal NP}$-complete in the strong sense. Several studies on this problem are based on approximations involving resource augmentation, i.e.,  speeding up individual processors.  This paper studies another type of resource augmentation by allocating additional processors, a  topic that has not been explored until recently.  We provide polynomial-time algorithms and analysis, in which the approximation factors are dependent upon the input instances. Specifically, the factors are related to the maximum ratio of the period to the relative deadline of a task in the given task set. We also show that these algorithms unfortunately cannot achieve a constant approximation factor for general cases. Furthermore, we prove that the problem does not admit any asymptotic polynomial-time approximation scheme (APTAS) unless ${\cal P}={\cal NP}$ when the task set has constrained deadlines, i.e., the relative deadline of a task is no more than the period of the task.
\end{abstract}

\section{Introduction}
\label{sec:intro}

The sporadic task model has been widely adopted to model recurring
executions of tasks in real-time systems \cite{Mok:1983:FDP:888951}.  A
sporadic real-time task $\tau_i$ is defined with a \emph{minimum inter-arrival
time} $T_i$, its timing constraint or \emph{relative deadline} $D_i$, and its
(worst-case)
\emph{execution time} $C_i$. A sporadic task represents an infinite
sequence of task instances, also called \emph{jobs}, that arrive with the
minimum inter-arrival time constraint. That is, any two consecutive jobs of task
$\tau_i$ should be
temporally separated by at least $T_i$.  When a job of task $\tau_i$
arrives at time $t$, the job must finish no later than its
\emph{absolute deadline} $t+D_i$.  According to the Liu and Layland
task model \cite{liu73scheduling}, the minimum inter-arrival time of a
task can also be interpreted as the \emph{period} of the task.

To schedule real-time tasks on multiprocessor platforms, there have
been three widely adopted paradigms: partitioned, global, and
semi-partitioned scheduling. A comprehensive survey of
multiprocessor scheduling in real-time systems can be found in
\cite{DBLP:journals/csur/DavisB11}. In this paper, we consider
\emph{partitioned scheduling}, in which tasks are statically
partitioned onto processors. This means that all the jobs of a task are
executed on a specific processor, which reduces the online scheduling
overhead since each processor can schedule the sporadic tasks
assigned on it without considering the tasks on the other
processors. Moreover, we consider preemptive scheduling on each
processor, i.e, a job may be preempted by another job on the
processor. For scheduling sporadic tasks on one processor, the
(preemptive) earliest-deadline-first (EDF) policy is optimal
\cite{liu73scheduling} in terms of meeting timing
constraints, in the sense that if the task set is schedulable then it will also
be schedulable under EDF. In EDF, the job (in the ready queue) with the
earliest absolute deadline has the highest priority for execution.
Alternatively, another widely adopted scheduling paradigm is
(preemptive) fixed-priority (FP) scheduling, where all jobs
released by a sporadic task have the same priority level.  

The complexity of testing whether a task set can be feasibly
scheduled on a uniprocessor depends on the relations
between the relative deadlines and the minimum inter-arrival times of
tasks. An input task set is said to have (1)~\emph{implicit
  deadlines} if the relative deadlines of sporadic tasks are equal to their
minimum inter-arrival times, (2)~\emph{constrained deadlines} if the minimum
inter-arrival times are no less than their relative deadlines, and
(3)~\emph{arbitrary deadlines}, otherwise.

On a uniprocessor, checking the feasibility for an implicit-deadline task set is
simple and well-known: the timing constraints are met by EDF if and only if the total
utilization $\sum_{\tau_i \in {\bf T}} \frac{C_i}{T_i}$ is at most $100\%$
\cite{liu73scheduling}.
Moreover, if every task $\tau_i$ on the processor is with $D_i \geq T_i$, it is
not difficult to see that testing whether the total utilization is less than or
equal to $100\%$ is also a necessary and sufficient schedulability test. This
  can be achieved by considering a more stringent case which sets $D_i$ to $T_i$
  for every $\tau_i$.  
  Hence, this special case of
  arbitrary-deadline task sets can be reformulated to task sets with implicit
  deadlines 
  without any loss of precision.
However, determining the schedulability for task sets with constrained or
arbitrary deadlines in general is much harder, due to the complex interactions
between the deadlines and the periods, and in particular is known to be
co${\mathcal NP}$-hard or co${\mathcal
NP}$-complete~\cite{DBLP:conf/soda/EisenbrandR10,DBLP:conf/ecrts/Ekberg015,DBLP:conf/rtss/Ekberg015}.


In this paper, we consider partitioned scheduling in homogeneous multiprocessor
systems.
Deciding if an implicit-deadline task set is schedulable on multiple processors
is already ${\cal NP}$-complete in the strong sense under partitioned
scheduling.
To cope with these ${\cal NP}$-hardness issues, one natural approach is to focus
on approximation algorithms, i.e., polynomial time algorithms that produce an
approximate solution instead of an exact one.
In our setting, this translates to designing algorithms that can find a feasible
schedule using either (i) faster or (ii) additional processors.
The goal, of course, is to design an algorithm that uses the least speeding up
or as few additional processors as possible.
In general, this approach is referred to as resource augmentation and is used
extensively to analyze and compare scheduling algorithms. See for example
\cite{PTS04} for a survey and motivation on why this is a useful measure for
evaluating the quality of scheduling algorithms in practice. However, such a
measure also has its potential pitfalls as recently studied and reported by Chen
et al. \cite{chen2017pitfalls}.
Interestingly, 
it turns out that there is a huge difference regarding the approximation factors
depending on whether it is possible to increase the processor speed or the
number of processors.
As already discussed in~\cite{ChenECRTS12}, approximation by speeding up is known as the \emph{multiprocessor
partitioned scheduling problem}, and by allocating more processors is known as
the \emph{multiprocessor partitioned packing problem}.
We study the latter one in this paper.

Formally, an algorithm ${\cal A}$ for the multiprocessor partitioned packing
problem is said to have an approximation factor $\rho$, if given any task set
${\bf T}$, it can find a feasible partition of ${\bf T}$ on $\rho M^*$
processors, where $M^*$ is the minimum (optimal) number of processors required
to schedule ${\bf T}$.
However, it turns out that the approximation factor is not the best measure in
our setting (it is not fine-grained enough).
For example, it is ${\cal NP}$-complete to decide if an implicit-deadline task
set is schedulable on 2 processors or whether 3 processors are necessary.
Assuming $\cal{P}\neq \cal{NP}$, this rules out the possibility of any efficient
algorithm with approximation factor better than $3/2$, as shown in~\cite{ChenECRTS12}.
(This lower bound 
is further lifted to $2$ for sporadic tasks in
Section~\ref{sec:hardness}.) The problem with this example is that it does
not rule out the possibility of an algorithm that only needs $M^*+1$ processors.
Clearly, such an algorithm is almost as good as optimum when $M^*$ is large and
would be very desirable.\footnote{Indeed, there are (very ingenious) algorithms
known for the implicit-deadline partitioning problem that use only $M^* +
O(\log^2 M^*)$ processors \cite{KK82}, based on the connection to the
bin-packing problem.} To get around this issue, a more refined measure is the
so-called asymptotic approximation factor.
An algorithm ${\cal A}$ has an \emph{asymptotic} approximation factor $\rho$ if
we can find a schedule using at most $\rho M^*+\alpha$ processors, where
$\alpha$ is a constant that does not depend on $M^*$. An algorithm is called an
asymptotic polynomial-time approximation scheme (APTAS) if, given an arbitrary
accuracy parameter $\epsilon>0$ as input, it finds a schedule using
$(1+\epsilon)M^* +O(1)$ processors and its running time is polynomial assuming
$\epsilon$ is a fixed constant.

For implicit-deadline task sets, the multiprocessor partitioned
scheduling problem, by speeding up, is equivalent to the Makespan problem
\cite{DBLP:journals/siamam/Graham69}, and the multiprocessor
partitioned packing problem, by allocating more processors, is
equivalent to the bin packing problem \cite{b:Gary79}.  The
Makespan problem admits polynomial-time approximation schemes (PTASes), by
Hochbaum and Shmoys \cite{DBLP:journals/jacm/HochbaumS87}, and the bin
packing problem admits asymptotic polynomial-time approximation
schemes (APTASes), by de la Vega and Lueker
\cite{DBLP:journals/combinatorica/VegaL81,KK82}.

When considering sporadic task sets with constrained or arbitrary
deadlines, the problem becomes more complicated.  When adopting
speeding-up for resource augmentation, the deadline-monotonic
partitioning proposed by Baruah and Fisher
\cite{DBLP:conf/rtss/BaruahF05,DBLP:journals/tc/BaruahF06} has been
shown to have a $\resourceArbM$ speed-up factor in~\cite{Chakraborty2011a},
where $M$ is the given number of identical processors.
The studies in~\cite{BaruahRTSS2011,ChenECRTS12,DBLP:conf/latin/BansalRSVZ14}
provide polynomial-time approximation schemes for some special cases when
speeding-up is possible. The PTAS by Baruah~\cite{BaruahRTSS2011} requires that
$\frac{D_{\max}}{D_{\min}}, \frac{C_{\max}}{C_{\min}},
\frac{T_{\max}}{T_{\min}}$ are constants, where $D_{\max}$ ($C_{\max}$ and
$T_{\max}$, respectively) is the maximum relative deadline (worst-case execution
time and period, respectively) in the task set and $D_{\min}$ ($C_{\min}$ and
$T_{\min}$, respectively) is the minimum relative deadline (worst-case execution
time and period, respectively) in the task set.
It was later shown  in~\cite{ChenECRTS12,DBLP:conf/latin/BansalRSVZ14} that the
complexity only depends on $\frac{D_{\max}}{D_{\min}}$.
If $\frac{D_{\max}}{D_{\min}}$ is a constant, there exists a PTAS developed by
Chen and Chakraborty~\cite{ChenECRTS12}, which admits feasible task partitioning
by speeding up the processors by $(1+\epsilon)$. The approach in
\cite{ChenECRTS12} deals with the multiprocessor partitioned scheduling problem
as a vector scheduling problem~\cite{DBLP:journals/siamcomp/ChekuriK04} by
constructing (roughly) $(1/\epsilon)\log \frac{D_{\max}}{D_{\min}}$ dimensions
and then applies the PTAS of the vector scheduling problem developed by Chekuri
and Khanna \cite{DBLP:journals/siamcomp/ChekuriK04} in a black-box manner. 
Bansal et al. \cite{DBLP:conf/latin/BansalRSVZ14} exploit the special structure
of the vectors and give a faster vector scheduling algorithm that is a
quasi-polynomial-time approximation scheme (qPTAS) even if
$\frac{D_{\max}}{D_{\min}}$ is polynomially bounded.

However,  augmentation by allocating additional processors, i.e.,
the multiprocessor partitioned packing problem, 
has not been explored until recently in real-time systems. Our previous
work in~\cite{ChenECRTS12} has initiated the study for minimizing
the number of processors 
for real-time tasks. While~\cite{ChenECRTS12} mostly focuses on approximation
algorithms for resource augmentation via speeding up, it also showed that
for the multiprocessor partitioned packing problem there does not exist any
APTAS for
arbitrary-deadline task sets, unless ${\cal P} = {\cal NP}$.
However, the proof in~\cite{ChenECRTS12} for the non-existence of
APTAS only works
when the input task set ${\bf T}$ has \emph{exactly} two types of
tasks in which one type consists of tasks with relative deadline less
than or equal to its period (i.e., $D_i \leq T_i$ for some $\tau_i$ in ${\bf T}$)
and another type consists of tasks with
relative deadline larger than its period (i.e., $D_j > T_j$ for some $\tau_j$ in ${\bf T}$).
Therefore, it cannot be directly applied for constrained-deadline task sets.

For the results, from the literature and also this paper, related to
the multiprocessor partitioned scheduling and packing problems,
Table~\ref{tab:summary} provides a short summary.

\begin{table}[t]
  \scalebox{0.622}{
  \begin{tabular}{|C{3.4cm}|C{3.5cm}|C{3.8cm}|C{4.2cm}|C{5.2cm}|}
    \hline
    & implicit deadlines& constrained deadlines& arbitrary deadlines & arbitrary deadlines (dependent on $\frac{D_{\max}}{D_{\min}}$) \\
    \hline
     partitioned EDF  & PTAS \cite{DBLP:journals/jacm/HochbaumS87}& $2.6322$-speed up \cite{Chakraborty2011a}& $3$-speed up \cite{Chakraborty2011a}& PTAS \cite{ChenECRTS12} for constant  $\frac{D_{\max}}{D_{\min}}$\\
     scheduling&&&&qPTAS \cite{DBLP:conf/latin/BansalRSVZ14} for polynomial
     $\frac{D_{\max}}{D_{\min}}$\\
    \hline 
     partitioned FP  & 
$\frac{7}{4}$
    \cite{DBLP:journals/tc/BurchardLOS95},  $1.5$
    \cite{DBLP:conf/waoa/KarrenbauerR10} &  $2.84306$
    speed-up \cite{ChenECRTS2016-Partition} & $3$-speed
    up\cite{ChenECRTS2016-Partition} &\\
    scheduling& (extended from packing)&&&\\ 
    \hline 
    \multirow{2}{*}{partitioned packing} &
    \multirow{2}{*}{APTAS \cite{DBLP:journals/combinatorica/VegaL81}} & 
    non-existence of APTAS$^\sharp$ & \multicolumn{2}{C{9.4cm}|}{non-existence
    of APTAS \cite{ChenECRTS12}}\\\cline{3-5} & &   
    \multicolumn{3}{C{14.2cm}|}{$2\lambda$-approximation$^\sharp$, asymptotic $\frac{2}{1-\gamma}$-approximation$^\sharp$, non-existence of $(2-\epsilon)$-approximation$^\sharp$}\\
      \hline
  \end{tabular}}
\caption{Summary of the
multiprocessor partitioned scheduling and packing problems, unless ${\cal P}={\cal NP}$, where
  $\gamma=\max_{\tau_i \in {\bf T}}\frac{C_i}{\min\{T_i, D_i\}}$,
  $\lambda=\max_{\tau_i \in {\bf T}}\max\{\frac{T_i}{D_i},1\}$, and
  $D_{\max}$ ($D_{\min}$
  ) is the task set's maximum (minimum
  ) relative deadline. 
  A $^\sharp$ marks results from
  this paper.
  }
  \label{tab:summary}
\end{table}

{\bf Our Contributions} This paper studies the multiprocessor partitioned
packing problem in much more detail.
On the positive side, when the ratio of the period of a constrained-deadline
task to the relative deadline of the task is at most $\lambda=\max_{\tau_i \in {\bf
    T}}\max\{\frac{T_i}{D_i},1\}$, in Section~\ref{sec:bin-packing},
we provide a simple polynomial-time algorithm with a $2\lambda$-approximation
factor.
In Section~\ref{sec:deadline-monotonic}, we show that  the deadline-monotonic
partitioning algorithm in
\cite{DBLP:conf/rtss/BaruahF05,DBLP:journals/tc/BaruahF06}
has an asymptotic  $\frac{2}{1-\gamma}$-approximation factor for the packing
problem,  where $\gamma=\max_{\tau_i \in
  {\bf T}}\frac{C_i}{\min\{T_i, D_i\}}$.
In particular, when $\gamma$ and $\lambda$ are not constant, adopting the
worst-fit or best-fit strategy in the deadline-monotonic partitioning algorithm
is shown to have an $\Omega(N)$ approximation factor, where $N$ is the number of
tasks.  In contrast, from \cite{Chakraborty2011a}, it is known that both
strategies have a speed-up factor $3$, \emph{when the resource augmentation is
to speed up processors}.
We also show that speeding up processors can be much more powerful than
allocating more processors. Specifically, in Section~\ref{sec:hardness}, we
provide input instances, in which the only feasible schedule is to run each task
on an individual processor but the system requires only one processor with a
speed-up factor of $(1+\epsilon)$, where $0 < \epsilon < 1$.

On the negative side, in Section \ref{sec:noAPTAS}, we show that there does not
exist any asymptotic polynomial-time approximation scheme (APTAS) for the
multiprocessor partitioned packing problem for task sets with constrained
deadlines, unless ${\cal P}={\cal NP}$.
As there is already an APTAS for the implicit deadline case, this together with
the result in~\cite{ChenECRTS12} gives a complete picture of the approximability
of multiprocessor partitioned packing for different types of task sets, as shown
in Table~\ref{tab:summary}.



\section{System Model}
\label{sec:model}


\subsection{Task and Platform Model}

We consider a set ${\bf T}=\setof{\tau_1, \tau_2, \ldots, \tau_N}$ of
$N$ independent sporadic real-time tasks. Each of these tasks releases an
infinite number of task instances, called jobs. A task $\tau_i$ is defined by
$(C_i, T_i, D_i)$, where
$D_i$ is its relative
deadline, $T_i$ is its minimum inter-arrival time (period), and $C_i$
is its (worst-case) execution time. For a job released at time $t$, the next
job must be released no earlier than $t+T_i$ and it must finish (up to) $C_i$
amount of execution before the jobs absolute deadline at $t+D_i$.
The \emph{utilization} of task $\tau_i$ is denoted by $u_i=\frac{C_i}{T_i}$. 
We consider platforms with identical processors,
i.e., the execution and timing property remains no matter which processor a task
is assigned to. According to
the relations of the relative deadlines and the minimum inter-arrival
times of the tasks in ${\bf T}$, the task set can be identified to be
with (1) implicit deadlines, i.e., $D_i=T_i~\forall \tau_i$, (2) constrained
deadlines, i.e., $D_i\leq T_i~\forall \tau_i$, or (3) arbitrary deadlines,
otherwise.
The cardinality of a set ${\bf X}$  
is denoted by $|{\bf X}|$.

In this paper we focus on partitioned scheduling, i.e., each task is statically
assigned to a fixed processor and all jobs of the task is executed on the assigned processor. On
each processor, the jobs related to the tasks allocated to that processor are
scheduled using preemptive earliest deadline first (EDF) scheduling. This means
that at each point the job with the shortest absolute deadline is executed,
and if a new job with a shorter absolute deadline arrives the currently executed
job is preempted and the new arriving job starts executing. A task set 
can be feasibly scheduled by EDF (or EDF is a feasible schedule) on a processor
if the timing constraints can be fulfilled by using EDF. 

\subsection{Problem Definition}

Given a task set ${\bf T}$, a \emph{feasible task partition} on $M$
identical processors is a collection of $M$ subsets, denoted ${\bf T}_1,
{\bf T}_2, \ldots, {\bf T}_M$, 
 such that
\begin{compactitem}
\item ${\bf T}_j \cap {\bf T}_{j'}=\emptyset$ for all $j\neq j'$,
\item $\cup_{j=1}^{M} {\bf T}_j$ is equal to the input task set ${\bf
    T}$, and
\item set ${\bf T}_j$ can meet the timing constraints 
by EDF  scheduling on a processor $j$.
\end{compactitem}

\begin{definition}
{\it The multiprocessor partitioned packing problem:} The objective is
to find a feasible task partition on $M$ identical processors with the
minimum $M$.\end{definition}


We assume that  $u_i \leq 100\%$ and
$\frac{C_i}{D_i} \leq 100\%$ for any task $\tau_i$ since otherwise 
there cannot be a feasible partition. 

\subsection{Demand Bound Function}


This paper focuses on the case where the arrival times of the sporadic
tasks are not specified, i.e., they arrive according to their interarrival
constraint and not according to a pre-defined pattern.
Baruah et al. \cite{baruah_sporadic} have shown that in this case the
worst-case pattern is 
to release the first job of tasks
synchronously (say, at time $0$ for notational brevity), 
and all subsequent jobs as early as possible.
Therefore, as shown in 
\cite{baruah_sporadic}, the \emph{demand
  bound function} ${\sc dbf}(\tau_i, t)$ of a task $\tau_i$ 
  that specifies the maximum demand of task $\tau_i$ to be released and
finished  within any time
interval with length $t$ is defined as
\begin{equation}
  \label{eq:dbf}
  {\sc dbf}(\tau_i, t) = \max\left\{0,
  \floor{\frac{t-D_i}{T_i}}+1\right\}\times C_i.
\end{equation}
The exact schedulability test of EDF, to verify whether EDF can feasibly
schedule the given task set on a processor, is to check whether the
summation of the demand bound functions of all the tasks is always
less than $t$ for all $t \geq 0$ \cite{baruah_sporadic}.

\section{Reduction to Bin Packing}
\label{sec:bin-packing}

When considering tasks with
implicit deadlines, the multiprocessor partitioned packing problem is
equivalent to the bin packing problem~\cite{b:Gary79}. Therefore, even though
the packing becomes more complicated when considering tasks with arbitrary
or constrained deadlines, it is pretty straightforward to handle the
problem by using existing algorithms for the bin packing problem if
the maximum ratio $\lambda$ of the period to the relative
deadline among the tasks, i.e., $\lambda=\max_{\tau_i \in {\bf
    T}}\max\{\frac{T_i}{D_i}, 1\}$, is not too
large.

For a given task set ${\bf T}$, we can basically transform the input
instance to a related task instance ${\bf T}^\dagger$ by creating task
$\tau_i^\dagger$ based on task $\tau_i$ in ${\bf T}$ such that
\begin{compactitem}
\item  $T_i^\dagger$ is $D_i$, $C_i^\dagger$ is $C_i$, and $D_i^\dagger$ is
$D_i$ when $T_i \geq D_i$ for $\tau_i$, and
\item $D_i^\dagger$ is $T_i^\dagger$, $C_i^\dagger$ is $C_i$ and
$T_i^\dagger$ is $T_i$ when $T_i < D_i$ for $\tau_i$.
\end{compactitem}











Now, we can adopt any
greedy fitting algorithms (i.e., a task is assigned to ``one''
allocated processor that is feasible; otherwise, a new processor is
allocated and the task is assigned to the newly allocated processor)
for the bin packing problem by considering only the utilization of
transformed tasks in ${\bf T}^\dagger$ for the multiprocessor
partitioned packing problem, as presented in~\cite[Chapter
8]{b:Vijay}. The construction of ${\bf T}^{\dagger}$ 
has a time complexity of $O(N)$, 
 and the greedy fitting algorithm has a time complexity of $O(NM)$. 


\begin{theorem}
  \label{thm:gamma-small-3}
  Any greedy
  fitting algorithm by considering ${\bf T}^\dagger$ for task
  assignment is a $2\lambda$-approximation algorithm for the
  multiprocessor partitioned packing problem.
\end{theorem}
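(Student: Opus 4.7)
The overall plan is to leverage the reduction ${\bf T}\to{\bf T}^\dagger$ so that the problem becomes classical one-dimensional bin packing on the items $u_i^\dagger = C_i/\min\{T_i, D_i\}$, and then combine (a) a lower bound on $M^*$ via total utilization with (b) the textbook any-fit upper bound for greedy bin packing. The first thing to verify is that the reduction is safe: each $\tau_i^\dagger$ is an implicit-deadline task with period and relative deadline $\min\{T_i, D_i\}$ and execution time $C_i$, so a set of transformed tasks is EDF-schedulable on a single processor if and only if its total utilization is at most $1$, which is precisely the bin-packing constraint. To confirm that any feasible packing of ${\bf T}^\dagger$ induces a feasible partition of ${\bf T}$, I would compare demand-bound functions using~\eqref{eq:dbf}: in both the $D_i\leq T_i$ and $D_i>T_i$ branches, shrinking the period and/or relative deadline only increases ${\sc dbf}$, so ${\sc dbf}(\tau_i,t)\leq {\sc dbf}(\tau_i^\dagger,t)$ for all $t\geq 0$.

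Next, I would relate $\sum_i u_i^\dagger$ to $M^*$. The necessary EDF condition yields $\sum_{\tau_i\in S_p} u_i\leq 1$ on every processor $p$ of the optimal partition, hence $\sum_{\tau_i\in{\bf T}} u_i\leq M^*$. A small case distinction on the transformation gives $u_i^\dagger\leq \lambda u_i$: for $D_i\leq T_i$ one has $u_i^\dagger=(T_i/D_i)\,u_i\leq \lambda u_i$ by the definition of $\lambda$, and for $D_i>T_i$ one has $u_i^\dagger=u_i$. Summing over $i$ yields $\sum_i u_i^\dagger\leq \lambda M^*$.

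The core combinatorial step is the standard any-fit lemma: at most one of the $M^\dagger$ opened bins has utilization at most $1/2$. If two such bins existed, then the task that opened the later one would itself have utilization at most $1/2$, and the earlier bin's utilization (which only grows over time) was also at most $1/2$ at that moment, so that task would have fit there, contradicting the greedy rule that a new processor is opened only if no existing one can accommodate the task. Consequently $\sum_i u_i^\dagger>(M^\dagger-1)/2$, which combined with the previous step gives $M^\dagger<2\lambda M^*+1$. The main (and essentially only) obstacle is to turn this strict inequality into the clean $2\lambda$-approximation; this is handled by the integrality of $M^\dagger$ together with $\lambda M^*\geq 1$ (since $M^*\geq 1$ and $\lambda\geq 1$ by definition), so the $+1$ is absorbed. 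The remaining verifications — the two dbf comparisons and the utilization case analysis — are routine.
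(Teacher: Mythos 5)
Your overall route is the same as the paper's: transform to implicit-deadline tasks, observe that the transformation only tightens the demand bound function so feasibility transfers back to ${\bf T}$, lower-bound the optimum by total utilization, relate $\sum_i u_i^\dagger$ to $\lambda M^*$ via the case analysis $u_i^\dagger\leq\lambda u_i$, and invoke the greedy bin-packing bound. All of those steps are correct. The one place where your argument does not close is the very last step. From the ``at most one bin is at most half full'' lemma you only obtain $\sum_i u_i^\dagger > (M^\dagger-1)/2$, hence $M^\dagger < 2\lambda M^* + 1$. Since $2\lambda M^*$ is in general \emph{not} an integer, the integrality of $M^\dagger$ only yields $M^\dagger \leq \lceil 2\lambda M^*\rceil$, which can strictly exceed $2\lambda M^*$ (for instance $2\lambda M^*=3.5$ still permits $M^\dagger=4$), and the observation $\lambda M^*\geq 1$ does not repair this. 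In classical bin packing the analogous absorption works precisely because the right-hand side $2\,\mathrm{OPT}$ is an integer; here it is $2\lambda M^*$ with real $\lambda$, so the ``$+1$'' is not absorbed and the stated (non-asymptotic) factor $2\lambda$ does not follow.

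The fix is to use the slightly stronger form of the any-fit bound, which is what the paper's proof relies on: for any two opened bins $i<j$ (in order of opening), the first item placed in bin $j$ did not fit in bin $i$ at that moment, so the final contents satisfy $c_i+c_j>1$; summing over all $\binom{M^\dagger}{2}$ pairs gives $(M^\dagger-1)\sum_k c_k > M^\dagger(M^\dagger-1)/2$, i.e.\ $\sum_i u_i^\dagger > M^\dagger/2$ whenever $M^\dagger\geq 2$ (the case $M^\dagger=1$ is trivially optimal). Combined with $\sum_i u_i^\dagger \leq \lambda\sum_i u_i \leq \lambda M^*$ this yields $M^\dagger < 2\lambda M^*$ with no additive slack, exactly as in the paper. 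Everything else in your write-up --- the dbf monotonicity argument, the two-case bound $u_i^\dagger\leq\lambda u_i$, and the utilization lower bound on $M^*$ --- matches the paper's argument.
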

\begin{proof}
  Clearly, as we only reduce the relative deadline and the periods,
  the timing parameters in ${\bf T}^\dagger$ are more stringent than
  in~${\bf T}$. Hence, a feasible task partition for ${\bf T}^\dagger$ on
  $M$ processors also yields a corresponding feasible task
  partition for ${\bf T}$ on $M$ processors.  As ${\bf
    T}^\dagger$ has implicit deadlines, we know that any task
  subset in ${\bf T}^\dagger$ with total utilization no more than
  $100\%$ can be feasibly scheduled by EDF on a processor, 
  and therefore the original tasks in that subset as well. For any greedy
  fitting algorithms that use $M$ processors, using the same proof as in
  \cite[Chapter 8]{b:Vijay}, we get  $\sum_{\tau_i \in {\bf T}^\dagger}
  \frac{C_i^\dagger}{T^\dagger_i}  >  \frac{M}{2}$.

  By definition, we know that
      $\sum_{\tau_i \in {\bf T}} \frac{C_i}{T_i} \geq
    \sum_{\tau_i^\dagger \in {\bf T}^\dagger} \frac{C_i^\dagger}{\lambda T^\dagger_i}  > \frac{M}{2\lambda}$.
  Therefore, any feasible solution for ${\bf T}$ uses at least
  $\frac{M}{2\lambda}$ processors and the approximation factor is hence proved.
\end{proof}

\section{Deadline-Monotonic Partitioning under EDF Scheduling}
\label{sec:deadline-monotonic}

This section presents the worst-case analysis of the
deadline-monotonic partitioning strategy, proposed by Baruah and
Fisher \cite{DBLP:journals/tc/BaruahF06,DBLP:conf/rtss/BaruahF05}, for
the multiprocessor partitioned packing problem. Note that the underlying 
scheduling algorithm is EDF but the tasks are considered in the
\textit{deadline-monotonic} (DM) order. Hence, in this section, we index
the tasks accordingly from the shortest relative deadline to the longest, i.e., $D_i \leq D_j$ if $i
< j$.  Specifically, 
in the DM 
partitioning,  the approximate demand bound function ${\sc dbf}^*(\tau_i, t)$ 
is used to approximate 
Eq.~\eqref{eq:dbf}, where
\begin{equation}
  \label{eq:dbf-approx}
    {\sc dbf}^*(\tau_i, t) =\left\{
    \begin{array}{ll}
      0 & \mbox{if } t < D_i\\
     \left(\frac{t-D_i}{T_i}+1\right)C_i& \mbox{otherwise.}
  \end{array}
\right.
\end{equation}
Even though the 
DM partitioning algorithm in
\cite{DBLP:journals/tc/BaruahF06,DBLP:conf/rtss/BaruahF05} is designed
for the multiprocessor partitioned scheduling problem, it can be
easily adapted to deal with the multiprocessor partitioned packing
problem. For completeness, we revise the algorithm in
\cite{DBLP:journals/tc/BaruahF06,DBLP:conf/rtss/BaruahF05} for the
multiprocessor partitioned packing problem and present the  
pseudo-code in Algorithm~\ref{alg:dmp}.
As discussed in
\cite{DBLP:journals/tc/BaruahF06,DBLP:conf/rtss/BaruahF05}, when a task $\tau_i$ is considered, a processor $m$ among
the allocated processors where both the following conditions hold
\begin{align}
&C_i + \sum_{\tau_j \in {\bf T}_m} {\sc dbf}^*(\tau_j, D_i) \leq D_i   \label{eq:demand-bound-test}\\
&u_i+ \sum_{\tau_j \in {\bf T}_m} u_j \leq 1   \label{eq:utilization-bound-test}
\end{align}
is selected to assign task $\tau_i$, where ${\bf T}_m$ is the set of
the tasks (as a subset of $\setof{\tau_1, \tau_2, \ldots, \tau_{i-1}}$), which have been assigned to
processor $m$ before considering $\tau_i$. If there is no $m$ where 
both Eq.~\eqref{eq:demand-bound-test} and
Eq.~\eqref{eq:utilization-bound-test} hold, a new processor is allocated and
task $\tau_i$ is assigned to the new processor.
The order in which the already allocated processors are considered depends on
the fitting strategy:
\begin{compactitem}
\item \emph{first-fit} (FF) strategy: choosing  the feasible $m$ with the
minimum index;
\item \emph{best-fit} (BF) strategy: choosing, among the feasible processors,
 $m$ with the maximum approximate demand bound at time $D_i$;
\item \emph{worst-fit} (WF) strategy: choosing $m$ with the
minimum approximate demand bound at time $D_i$.
\end{compactitem}

\begin{algorithm}[t]
  \caption{Deadline-Monotonic Partitioning}
  \label{alg:dmp}
  \begin{algorithmic}[1]
    \INPUT set ${\bf T}$ of $N$ tasks;

    \STATE re-index (sort) tasks such that $D_i \leq D_j$ for $i < j$;

    \STATE  $M \leftarrow 1$, ${\bf T}_1 \leftarrow \setof{\tau_1}$;

    \FOR {$i=2$ to $N$}

    \IF {$\exists m \in \setof{1, 2, \ldots, M}$ such that both
      (\ref{eq:demand-bound-test}) and (\ref{eq:utilization-bound-test})
       hold} \label{algo:fitting-test}

    \STATE choose $m \in \setof{1, 2, \ldots, M}$ \emph{\bf by preference} such that both
      (\ref{eq:demand-bound-test}) and (\ref{eq:utilization-bound-test})
       hold;

    \STATE assign $\tau_i$ to processor $m$ with ${\bf T}_m \leftarrow {\bf T}_m
    \cup \setof{\tau_i}$;

    \ELSE
    \STATE $M \leftarrow M+1$; ${\bf T}_M \leftarrow \setof{\tau_i}$;
    \ENDIF

    \ENDFOR

    \STATE return feasible task partition ${\bf T}_1, {\bf T}_2,
    \ldots, {\bf T}_M$;
  \end{algorithmic} \label{algo:fitting}
\end{algorithm}

For a given number of processors, it has been proved in~\cite{Chakraborty2011a}
that the speed-up factor of the DM 
partitioning 
is at most $3$, 
independent from the fitting strategy.
However, if the objective is to minimize the number of
allocated processors, we will show that DM 
partitioning has an approximation factor of at least $\frac{N}{4}$ (in the worst
case) when the best-fit or  worst-fit strategy is adopted. We will prove this by
explicitly constructing two concrete task sets with this property. Afterwards,
we show that the asymptotic approximation factor of DM 
partitioning is at
most $\frac{2}{1-\gamma}$ for packing, where  
$\gamma=\max_{\tau_i \in {\bf T}}\frac{C_i}{\min\{T_i, D_i\}}$.






\begin{theorem}
  \label{thm:best-fit}
  The approximation factor of the deadline-monotonic partitioning
  algorithm with the best-fit strategy is at least $\frac{N}{4}$ when
  $N \geq 8$ and the schedulability test is based on
  Eq.~\eqref{eq:demand-bound-test} and
Eq.~\eqref{eq:utilization-bound-test}.
\end{theorem}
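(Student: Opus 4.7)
The plan is to prove this by explicit construction: exhibit a task set $\mathbf{T}$ with $N$ tasks on which DM with best-fit uses at least $N/4$ processors, while $\mathrm{OPT}$ uses at most one. To this end I would construct $\mathbf{T}$ as $N/4$ identical ``blocks'' of four tasks, each block laid out so that when considered in deadline-monotonic order (a) best-fit collapses the four tasks of a block onto a single freshly opened processor because the approximate demand bound \eqref{eq:dbf-approx} at each new task's deadline becomes just large enough to block it from any previously used processor, yet (b) a globally clever partition can merge many blocks into very few processors (ideally one) by exploiting the fact that the \emph{true} demand bound (and hence actual EDF feasibility) is much looser than what the inequality at $D_i$ sees after four tasks accumulate.

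More concretely, I would parameterize each block so that the first task of the block has very small $C/T$ but a demand-bound contribution at the larger deadlines that is deliberately spiky (i.e., $T$ chosen so $\lceil (D_i-D_j)/T_j\rceil$ jumps across successive $D_i$'s). The deadlines are set strictly increasing inside a block and with a large gap between blocks so that the DM ordering processes an entire block before moving on. The four intra-block computation times $C_1,C_2,C_3,C_4$ are tuned so that: after the first block is placed on processor $1$, best-fit's ``max approximate DBF at $D_i$'' rule identifies that the running approximate demand on processor $1$ at the deadline of the next block's first task exceeds the threshold in \eqref{eq:demand-bound-test}, whereas no other processor has enough approximate demand to allow the new task either (the utilization test is always passed by design, so only the DBF test fires). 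Thus every new block is opened on its own processor, and the algorithm uses at least $N/4$ processors.

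To establish $\mathrm{OPT}\le 1$, I would exhibit the partition that places all $N$ tasks on a single processor and verify feasibility using the exact ${\sc dbf}$ in \eqref{eq:dbf} (with floor): because $\lfloor (t-D_j)/T_j\rfloor$ is strictly smaller than the corresponding unfloored expression used in the algorithm's test, the actual total demand at any checking time stays under $t$. The verification reduces to checking ${\sc dbf}$ only at the finitely many deadlines $\{kT_j+D_j\}$ up to the hyper-period (or, equivalently, up to the feasibility window $\max_j D_j + \mathrm{lcm}$, which is a constant in the construction), so the check is bounded.

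The main obstacle is the delicate balancing act in step~(a): the parameters must be chosen so that (i) the utilization test \eqref{eq:utilization-bound-test} is kept slack for every candidate placement (to rule out failures for that reason), (ii) the approximate DBF test \eqref{eq:demand-bound-test} fires exactly at the right step (after the four intra-block tasks have been packed together) on \emph{every} already-opened processor (using best-fit's rule to predict the ordering), and (iii) the floor discrepancy between ${\sc dbf}^*$ and ${\sc dbf}$ is large enough to leave true EDF feasibility intact under the single-processor OPT packing. Getting a clean analytic choice of $(C_i,D_i,T_i)$ that simultaneously satisfies (i)--(iii) for all $N\ge 8$ is where the work lies; once such parameters are fixed, the simulation of Algorithm~\ref{algo:fitting} and the feasibility verification of OPT are routine.
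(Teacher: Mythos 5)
There is a fatal flaw in your plan: the target $\mathrm{OPT}\le 1$ is incompatible with forcing the algorithm to open more than two processors, so no choice of parameters can satisfy your conditions (i)--(iii) simultaneously. Suppose the whole set is EDF-feasible on one processor. Then $\sum_j u_j\le 1$, so the utilization test \eqref{eq:utilization-bound-test} never fails (exactly as you intend), and a new processor is opened for $\tau_i$ only if \eqref{eq:demand-bound-test} fails on \emph{every} one of the $M$ currently open processors; summing these $M$ failed inequalities gives $M C_i+\sum_{j<i}{\sc dbf}^*(\tau_j,D_i)>MD_i$. But single-processor feasibility at $t=D_i$ gives $\sum_{j\le i}{\sc dbf}(\tau_j,D_i)\le D_i$, and since the tasks are in DM order, ${\sc dbf}(\tau_j,D_i)\ge C_j$ for $j\le i$; combining this with ${\sc dbf}^*(\tau_j,D_i)\le {\sc dbf}(\tau_j,D_i)+C_j$ yields $\sum_{j<i}{\sc dbf}^*(\tau_j,D_i)\le 2(D_i-C_i)$, so the failure condition forces $(M-2)C_i>(M-2)D_i$, which is impossible for $M\ge 2$ given $C_i\le D_i$. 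Hence under $\mathrm{OPT}=1$ the algorithm uses at most $2$ processors: the floor-versus-no-floor discrepancy between \eqref{eq:dbf} and \eqref{eq:dbf-approx} that you want to exploit can cost at most a factor of $2$, never $\Omega(N)$.

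The paper's construction therefore necessarily takes $\mathrm{OPT}=2$ and makes the algorithm use $N/2$ processors (pairs, not quadruples of tasks per processor). It interleaves two families with matched deadlines $D_{2i-1}=D_{2i}$: ``spiky'' strictly-constrained-deadline tasks with $C_i\approx D_i$ and enormous period $H$, which an optimal schedule stacks all together on one processor, and implicit-deadline tasks of utilization $1/K$ that exactly fill a second processor. Best-fit glues each implicit-deadline task to the processor just opened by its deadline-mate, and it is the accumulated \emph{utilization} term $(t-D_j)C_j/T_j$ in ${\sc dbf}^*$ on each processor --- not any ceiling slack --- that grows with $t$ and blocks every later spiky task from every earlier processor. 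To salvage your argument you must abandon $\mathrm{OPT}=1$, aim for the algorithm using $N/2$ processors against a two-processor optimum, and drive the separation by this utilization-poisoning mechanism rather than by the ${\sc dbf}$ versus ${\sc dbf}^*$ gap.
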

\begin{proof}
The theorem is proven by 
providing a task set that can be scheduled on two processors 
  but where Algorithm~\ref{algo:fitting} when applying the best-fit strategy uses
  $\frac{N}{2}$ processors. Under the assumption that $K \geq 4$ is an
  integer, $N$ is $2K$, and $H$ is sufficiently large, i.e., $H \gg K^K$, such
  a task set can be constructed as:
  \begin{compactitem}
  \item Let $D_1=1$, $C_1=1/K$, and $T_1=H$.
  \item For  $i=2, 4, \ldots,2K$, let $D_i = K^{\frac{i}{2}-1}$,
    $C_i = K^{\frac{i}{2}-2}$, and $T_i=D_i$.
  \item For $i=3, 5, \ldots,2K-1$, let $D_i =
    K^{\frac{i-1}{2}}$, $C_i = K^{\frac{i-1}{2}} -
    K^{\frac{i-1}{2}-1}$, and $T_i=H$.
  \end{compactitem}
  The task set can be scheduled on two processors under EDF if all tasks with an
  odd index are assigned to processor 1 and all tasks with an even index are
  assigned to processor 2. On the other hand, the best-fit strategy assigns
  $\tau_i$ to processor $\ceiling{\frac{i}{2}}$. The resulting solution uses
  $K$ processors.  Details
  are in the Appendix\ifbool{techreport}{.}{ in~\cite{packing-arxiv-2018}.}
\end{proof}

\begin{theorem}
  \label{thm:worst-fit}
  The approximation factor of the deadline-monotonic partitioning
  algorithm with the worst-fit strategy is at least $\frac{N}{4}$  when
  the schedulability test is based on
  Eq.~\eqref{eq:demand-bound-test}~and~Eq.~\eqref{eq:utilization-bound-test}.
\end{theorem}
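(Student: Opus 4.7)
My plan is to construct a task set essentially identical to the one used in the proof of Theorem~\ref{thm:best-fit} and then observe that on this instance the executions of Algorithm~\ref{alg:dmp} under best-fit and worst-fit coincide. Concretely, take $N=2K$ with $K \geq 4$ and $H \gg K^K$, and define $\tau_1$, the even-indexed tasks, and the odd-indexed tasks (for $i \geq 3$) exactly as in Theorem~\ref{thm:best-fit}. That theorem already establishes that the resulting task set admits a feasible partition on two processors (odd-indexed tasks on one, even-indexed tasks on the other), so $M^* = 2$.

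The key observation driving the proof is that, at every iteration of Algorithm~\ref{alg:dmp} on this instance, the task currently being placed either fits on exactly one already-allocated processor or on none at all. Since worst-fit has no choice in either scenario, its trace must equal the best-fit trace analysed in Theorem~\ref{thm:best-fit}, which uses $K = N/2$ processors; this yields the claimed $K/2 = N/4$ lower bound on the approximation ratio.

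To turn this observation into a proof, I would proceed by induction on $k$, maintaining the invariant that after $\tau_{2k}$ has been placed, tasks $\tau_{2j-1}$ and $\tau_{2j}$ share processor $j$ for each $j \leq k$, and exactly $k$ processors are in use. The inductive step amounts to two checks against Eq.~\eqref{eq:demand-bound-test}: first, that at time $D_{2k+1}=K^k$ the approximate demand bound on every processor $j \leq k$ together with $C_{2k+1}=K^k-K^{k-1}$ strictly exceeds $K^k$, forcing $\tau_{2k+1}$ to open processor $k+1$; and second, that at time $D_{2k+2}=K^k$ only the freshly opened processor $k+1$ has enough slack to accommodate $C_{2k+2}=K^{k-1}$, the earlier processors still being blocked. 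The choice $H \gg K^K$ ensures that ${\sc dbf}^*(\tau_i,t)$ essentially equals $C_i$ for every large-period task at every time of interest, which simplifies these estimates considerably.

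The main technical obstacle will be bookkeeping the partner contribution on each earlier processor $j$: the even-indexed task $\tau_{2j}$, because it has $T_{2j}=D_{2j}$ and accumulates jobs up to time $K^k$, contributes a full $K^{k-1}$ to the demand at $K^k$, while the odd-indexed task $\tau_{2j-1}$ contributes essentially $C_{2j-1}=K^{j-1}-K^{j-2}$. I must show that the sum of these contributions together with $C_{2k+1}$ strictly exceeds $K^k$ for every $j \leq k$, and that the symmetric calculation for $\tau_{2k+2}$ leaves only processor $k+1$ feasible. The geometric spacing of the deadlines is exactly what makes both inequalities go through, and once they are verified the theorem follows immediately from the invariant.
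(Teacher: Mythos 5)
Your central claim --- that on the best-fit instance every task fits on exactly one already-allocated processor or on none, so the worst-fit trace coincides with the best-fit trace --- is false, and the proof collapses at exactly that point. In the instance of Theorem~\ref{thm:best-fit} the even-indexed tasks are feasible on \emph{every} allocated processor: e.g.\ for $\tau_4$ (with $C_4=1$, $D_4=K$) processor~$1$ carrying $\{\tau_1,\tau_2\}$ has approximate demand $\approx 1+\tfrac1K$ at time $K$, so $C_4+1+\tfrac1K\le K$ and Eq.~\eqref{eq:utilization-bound-test} holds as well, while processor~$2$ carrying $\tau_3$ has demand $K-1$. Best-fit deliberately picks the processor with the \emph{larger} demand (processor~$2$), which is what keeps the pairing $\{\tau_{2i+1},\tau_{2i+2}\}$ intact; worst-fit picks the \emph{smaller} one and would send $\tau_4$ to processor~$1$. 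From that point on the two traces diverge and your invariant (``$\tau_{2j-1}$ and $\tau_{2j}$ share processor $j$'') fails, so you have not shown that worst-fit opens $K$ processors on this instance.

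This is precisely why the paper uses a \emph{different} task set for the worst-fit bound: the deadlines are shifted so that $D_{2i}=D_{2i+1}=K^i$ (rather than $D_{2i-1}=D_{2i}$), i.e.\ $D_i=K^{i/2}$ and $C_i=K^{i/2-1}$ for even $i$. With that shift, when $\tau_{2i+2}$ arrives at time $D_{2i+2}=K^{i+1}$, each older processor $j+1$ carries an implicit-deadline task $\tau_{2j+2}$ whose accumulated demand at $K^{i+1}$ is $\approx K^{i}$, whereas the freshly opened processor $i+1$ carries only $\tau_{2i+1}$ with demand $K^i-K^{i-1}<K^i$; hence the new processor has the \emph{minimum} approximate demand and worst-fit (not best-fit) is the strategy that preserves the pairing. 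If you want to salvage your approach you must either adopt this shifted instance and redo the induction with the inequality ${\sc dbf}^*(\tau_{2j+1},D_{2i+2})+{\sc dbf}^*(\tau_{2j+2},D_{2i+2})>{\sc dbf}^*(\tau_{2i+1},D_{2i+2})$, or carry out a fresh analysis of what worst-fit actually does on the best-fit instance --- which is not the trace you describe.
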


\begin{proof}
The proof is very similar to the proof of Theorem~\ref{thm:best-fit},
considering the task set:
  \begin{compactitem}
  \item Let $D_1=1$, $C_1=1$, and $T_1=H$.
  \item For  $i=2, 4, \ldots,2K$, let $D_i = K^{\frac{i}{2}}$,
    $C_i = K^{\frac{i}{2}-1}$, and $T_i=D_i$.
  \item For $i=3, 5, \ldots,2K-1$, let $D_i =
    K^{\frac{i-1}{2}}$, $C_i = K^{\frac{i-1}{2}} -
    K^{\frac{i-1}{2}-1}$, and $T_i=H$.
  \end{compactitem}
  Odd tasks are assigned to processor 1 and even tasks to processor 2
  the task set is schedulable while $\tau_i$ is assigned to processor
  $\ceiling{\frac{i}{2}}$ using the worst-fit strategy. Details are in
  the Appendix\ifbool{techreport}{.}{ in~\cite{packing-arxiv-2018}.}
\end{proof}

\begin{theorem}
  \label{thm:gamma-small-2}
  The DM partitioning algorithm is an
  asymptotic $\frac{2}{1-\gamma}$-approximation algorithm for the multiprocessor
  partitioned packing problem, when  $\gamma=\max_{\tau_i \in {\bf
      T}}\frac{C_i}{\min\{T_i, D_i\}}$ and $\gamma < 1$.
\end{theorem}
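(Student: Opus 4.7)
The plan is to adapt the classical pair-the-bins argument from first-fit bin packing to the DM setting, exploiting the hypothesis $\gamma<1$, which forces every task to be ``small'' both in utilization ($u_i\le\gamma$) and in density ($C_i/D_i\le\gamma$).

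The first step is to characterize what must hold when DM opens a fresh processor for $\tau_i$: the task was rejected by every previously opened processor, so on each such processor $m$ at least one of the two tests in Algorithm~\ref{alg:dmp} fails. Either $\sum_{\tau_j\in{\bf T}_m}u_j > 1-u_i\ge 1-\gamma$ (utilization failure), or $\sum_{\tau_j\in{\bf T}_m}{\sc dbf}^*(\tau_j,D_i) > (1-C_i/D_i)D_i\ge (1-\gamma)D_i$ (demand bound failure). Next I would establish the pointwise estimate ${\sc dbf}^*(\tau_j,t)\le \delta_j\,t$, valid for $t\ge D_j$ with $\delta_j=C_j/\min\{T_j,D_j\}$, via a short case analysis on whether $D_j\le T_j$ or $D_j>T_j$. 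Together, these turn the two failure modes into a single conclusion: every processor that rejects $\tau_i$ carries, at that moment, a density sum strictly exceeding $1-\gamma$.

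The pairing step then groups DM processors into consecutive pairs $(2k-1,2k)$ in opening order. Inside a pair, the opener $\tau_{i_{2k}}$ of processor $2k$ was rejected by processor $2k-1$, so the density sum on ${\bf T}_{2k-1}$ at that instant---and hence in the final assignment, since tasks are never removed---exceeds $1-\gamma$. Summing over the $\lfloor M/2\rfloor$ pairs yields $\sum_{\tau_j\in{\bf T}}\delta_j > \lfloor M/2\rfloor(1-\gamma)$.

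The main obstacle is converting this density lower bound into a lower bound on $M^*$, since the naive inequality $\sum_j\delta_j\le M^*$ fails: a feasible processor may host tasks of widely varying deadlines whose densities sum to more than one. To close the argument I would split the pairs by which test caused the failure. For pairs where utilization failed, the bound $\sum_j u_j\le M^*$ limits the number of such pairs by $M^*/(1-\gamma)$. For pairs where the demand bound failed, I would invoke the feasibility inequality $\sum_{\tau_j}{\sc dbf}(\tau_j,t)\le M^* t$ specialized to $t=D_{i_{2k}}$, combined with ${\sc dbf}^*(\tau_j,t)<{\sc dbf}(\tau_j,t)+C_j$ and a charging of the $C_j$ surplus against the $u_j$ budget, to bound the number of such pairs likewise by a multiple of $M^*/(1-\gamma)$. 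Adding the two counts delivers $M\le\frac{2}{1-\gamma}M^*+O(1)$, where the $O(1)$ absorbs both the unpaired last processor when $M$ is odd and the rounding losses from the ${\sc dbf}^*$-to-${\sc dbf}$ comparison.
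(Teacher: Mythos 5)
Your ingredients are the right ones---split the rejecting processors by which test failed, bound the utilization failures against $\sum_j u_j\le M^*$, and bound the demand failures against the feasibility inequality $\sum_{\tau_j\in{\bf T}}{\sc dbf}(\tau_j,t)\le M^*t$---but the pairing scaffold imported from bin packing breaks the argument in two places. First, the aggregation step for the demand-failure pairs does not go through: the rejection of the opener $\tau_{i_{2k}}$ certifies $\sum_{\tau_j\in{\bf T}_{2k-1}}{\sc dbf}(\tau_j,D_{i_{2k}})$ is large relative to $D_{i_{2k}}$, a \emph{different} evaluation time for each pair, whereas the feasibility inequality can only be invoked at one fixed $t$; since ${\sc dbf}(\tau_j,t)/t$ is not monotone in $t$, you cannot add these per-pair contributions across pairs and compare the total against a single $M^*$. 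Second, even granting the aggregation, the constants do not come out: the pairing certifies only $\lfloor M/2\rfloor$ rejecting processors, and each demand-failure pair consumes \emph{two} $M^*$ budgets (one for the ${\sc dbf}$ sum and one for the utilization surplus incurred when passing from ${\sc dbf}^*$ to ${\sc dbf}$), so the pair count is only bounded by about $3M^*/(1-\gamma)$ and you land at $M\lesssim\frac{6}{1-\gamma}M^*$, not $\frac{2}{1-\gamma}M^*$. A smaller point: the surplus ${\sc dbf}^*(\tau_j,t)-{\sc dbf}(\tau_j,t)=\bigl(\frac{t-D_j}{T_j}-\lfloor\frac{t-D_j}{T_j}\rfloor\bigr)C_j$ should be bounded by $\frac{t-D_j}{T_j}C_j\le u_jt$ rather than by $C_j$; charging $C_j$ against the $u_j$ budget fails whenever $T_j>t$.

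The repair is to drop the pairing entirely: when the last processor is opened for some task $\tau_\ell$, \emph{every} previously opened processor has just rejected $\tau_\ell$, so you get $M-1$ rejecting processors all anchored at the single time $D_\ell$ (you are leaving a factor of two on the table by certifying only half of them). Splitting them into ${\bf M}_1$ (demand test fails) and ${\bf M}_2$ (utilization test fails), each $m\in{\bf M}_2$ contributes more than $1-u_\ell\ge 1-\gamma$ to $\sum_ju_j$, while each $m\in{\bf M}_1$ satisfies $\sum_{\tau_j\in{\bf T}_m}\bigl({\sc dbf}(\tau_j,D_\ell)/D_\ell+u_j\bigr)>1-C_\ell/D_\ell\ge 1-\gamma$; summing over all $M-1$ processors and spending $\sum_j{\sc dbf}(\tau_j,D_\ell)/D_\ell\le M^*$ and $\sum_ju_j\le M^*$ once each gives $(M-1)(1-\gamma)<2M^*$, which is the claim. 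This is in substance what the paper does: it uses exactly these two lower bounds on $M^*$, anchored at $\tau_\ell$, and arbitrates between them via auxiliary parameters $\beta,k$ and an optimization over $x=|{\bf M}_2|/|{\bf M}_1|$.
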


\begin{proof}
We consider the task $\tau_l$ which is the task that is responsible for the
last processor that is allocated by Algorithm~\ref{alg:dmp}. The other
processors are categorized
  into two disjoint sets ${\bf M}_1$ and ${\bf M}_2$, depending on whether  
  Eq.~\eqref{eq:demand-bound-test}  or
  Eq.~\eqref{eq:utilization-bound-test} is violated when
  Algorithm~\ref{alg:dmp} tries to assign $\tau_l$ (if both conditions are
  violated, the processor is in ${\bf M}_1$). The two sets are considered
  individually and the maximum number of processors in both sets is determined
  based on the minimum utilization for each of the processors.
  Afterwards, a necessary condition for the amount of processors that is
  at least needed for a feasible solution is provided  and the relation between
  the two values proves the theorem. Details can be found in the Appendix\ifbool{techreport}{.}{ in~\cite{packing-arxiv-2018}.} 
\end{proof}

\section{Hardness of Approximations}
\label{sec:hardness}

It has been shown in~\cite{ChenECRTS12,BaruahRTSS2011} that a PTAS
exists for augmenting the resources by speeding up. A straightforward
question is to see whether such PTASes will be helpful for bounding the
lower or upper bounds for multiprocessor partitioned
packing. Unfortunately, the following theorem shows that using speeding up to get a lower bound for the
number of required processors is not useful.

\begin{theorem}
  \label{theorem:gap-speedup-vs-processors}
  There exists a set of input instances, in which the number of
  allocated processors is up to $N$, while the task set can be
  feasibly scheduled by EDF with a speed-up factor $(1+\epsilon)$
  on a processor, where $0 < \epsilon < 1$.
\end{theorem}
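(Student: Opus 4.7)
My plan is to exhibit, for every given $\epsilon \in (0,1)$ and every integer $N \ge 1$, a concrete instance of $N$ constrained-deadline tasks that (i)~requires exactly $N$ processors under any feasible partition, yet (ii)~is feasibly EDF-schedulable on a single processor of speed $1+\epsilon$. The tasks will have geometrically growing relative deadlines: at unit speed, the pairwise dbf overlap at the larger deadline forces every pair onto different processors, while at augmented speed the geometric slack absorbs all demand on one processor.

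Concretely, I would fix an integer $r$ with $r > 1 + 1/\epsilon$ (possible since $\epsilon>0$) and define, for $i=1,\dots,N$, the task $\tau_i$ with $D_i = r^{i-1}$, $C_i = D_i$, and $T_i = H$ for some very large $H$ (say $H \ge r^{2N}/\epsilon$), chosen so that the analysis window needed to verify the dbf test contains at most one job per task and so that $\sum_i C_i/T_i$ is vanishingly small. Each task alone is feasible since $\text{dbf}(\tau_i, D_i) = C_i = D_i$.

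For step (i), I would verify that no two distinct tasks $\tau_i,\tau_j$ with $i<j$ can share a processor: using the dbf test at $t = D_j$, and since $T_i,T_j \gg D_j$ only the first job of each counts, giving $\text{dbf}(\tau_i,D_j) + \text{dbf}(\tau_j,D_j) = r^{i-1} + r^{j-1} > r^{j-1} = D_j$, which violates the EDF schedulability condition. Hence every feasible partition needs exactly $N$ processors. For step (ii), I would verify $\sum_i \text{dbf}(\tau_i,t) \le (1+\epsilon)t$ on the speeded-up processor. At each critical point $t = D_i$ the aggregate demand is $\sum_{j=1}^{i} r^{j-1} = (r^i-1)/(r-1) < \tfrac{r}{r-1}\,D_i \le (1+\epsilon)D_i$, where the last step uses $r > 1+1/\epsilon$; between consecutive critical points the dbf is constant while $t$ grows, so the inequality propagates.

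The main obstacle I anticipate is making the single-processor argument rigorous for all $t \ge 0$, in particular beyond the first period of any task when second jobs begin to contribute to the dbf. I would close this gap by choosing $H$ so large that the long-run slope $\sum_i u_i$ stays well below $\epsilon$; any dbf jump from a later-arriving job is then dominated by the available linear budget $(1+\epsilon)t$, and the test holds on all of $[0,\infty)$. This establishes that a single processor of speed $1+\epsilon$ suffices while any feasible partition under unit speed needs $N$ processors, yielding the claimed gap.
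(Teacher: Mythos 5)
Your proposal is correct and follows essentially the same route as the paper: $N$ tasks with $C_i=D_i$ (so no two can share a unit-speed processor) whose deadlines grow geometrically and which share one large common period, with the single-processor feasibility at speed $1+\epsilon$ verified via the demand bound function at the critical points $\ell T + D_j$. The only difference is parametric — you take a geometric ratio $r>1+1/\epsilon$ and a huge period so the demand sits strictly below $(1+\epsilon)t$, whereas the paper uses ratio $\tfrac{1+\epsilon}{\epsilon}$ and period $D_N$ to make the demand exactly $(1+\epsilon)t$ at the critical points — and both choices are valid.
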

\begin{proof}
  We provide a set of input instances, with the property described in
  the statement:
  \begin{compactitem}
  \item Let $D_1=1$, $C_1=1$, and $T_1=\frac{(1+\epsilon)^{N-2}}{\epsilon^{N-1}}$.
  \item For any $i=2, 3,\ldots,N$, let $D_i = \frac{(1+\epsilon)^{i-2}}{\epsilon^{i-1}}$,
    $C_i = D_i$, and $T_i=\frac{(1+\epsilon)^{N-2}}{\epsilon^{N-1}}$.
  \end{compactitem}
  Since $C_i=D_i$ for any task $\tau_i$, assigning any two tasks on the same
  processor is infeasible without speeding up. Therefore, the only
  feasible processor allocation is $N$ processors and to assign
 each task individually on one processor.  
However, by speeding up the system by a factor $1+\epsilon$, the
  tasks can be feasibly scheduled on one processor due to
  $\sum_{i=1}^{N} \frac{dbf(\tau_i, t)}{1+\epsilon} \leq t$ for any $t
  > 0$. A proof is in the Appendix\ifbool{techreport}{.}{ in~\cite{packing-arxiv-2018}.}  Hence, the gap between these two
  types of resource augmentation is up to $N$.
\end{proof}




Moreover, the following theorem shows the inapproximability for a
factor $2$ without adopting asymptotic approximation.

\begin{theorem}
  \label{thm:packing-lowerbound-2}
  For any $\epsilon > 0$, there is no polynomial-time approximation
  algorithm with an approximation factor of $2-\epsilon$ for
  the multiprocessor partitioned packing problem, unless ${\cal
    P}={\cal NP}$.
\end{theorem}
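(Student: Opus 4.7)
The plan is to show that any polynomial-time $(2-\epsilon)$-approximation algorithm for the multiprocessor partitioned packing problem would let us decide the (co-$\mathcal{NP}$-hard) uniprocessor EDF-feasibility problem for constrained-deadline sporadic tasks in polynomial time, and therefore imply $\mathcal{P}=\mathcal{NP}$.

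The core is a gap argument at $M^{*}=1$. Suppose $\mathcal{A}$ is a hypothetical polynomial-time algorithm that, on every sporadic task set $\mathbf{T}$, produces a feasible partition onto $M(\mathbf{T})$ processors satisfying $M(\mathbf{T})\leq (2-\epsilon)\,M^{*}(\mathbf{T})$. Whenever $M^{*}(\mathbf{T})=1$, the approximation guarantee forces $M(\mathbf{T})\leq 2-\epsilon < 2$, and since $M(\mathbf{T})$ is a positive integer we must have $M(\mathbf{T})=1$. Conversely, whenever $\mathcal{A}$ returns $M(\mathbf{T})=1$, the task set is by definition feasible on a single processor, so $M^{*}(\mathbf{T})=1$. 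Thus running $\mathcal{A}$ and inspecting the single bit ``$M(\mathbf{T})=1$?'' decides EDF-feasibility on one processor exactly.

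First I would invoke the known fact that deciding whether a constrained-deadline sporadic task set is EDF-feasible on a single processor is co-$\mathcal{NP}$-hard \cite{DBLP:conf/soda/EisenbrandR10,DBLP:conf/ecrts/Ekberg015}. Combined with the gap argument above, a polynomial-time $\mathcal{A}$ would place a co-$\mathcal{NP}$-hard problem in $\mathcal{P}$, and hence force $\mathcal{NP}=\mathcal{P}$. The reduction is uniform in $\epsilon$: for every fixed $\epsilon>0$ the strict inequality $(2-\epsilon)\cdot 1 < 2$ yields exactly the same $M^{*}=1$ versus $M^{*}\geq 2$ separation, so the argument does not depend on $\epsilon$ being small.

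The main (mild) obstacle is to verify that the hardness reduction is compatible with the packing framework used here: since constrained-deadline sporadic task sets are legitimate inputs to the packing problem, and the notion of EDF-feasibility appealed to in the cited co-$\mathcal{NP}$-hardness results coincides with the one used in the definition of a \emph{feasible task partition} in Section~\ref{sec:model}, no extra construction, rescaling, or padding of the instances is required. Consequently, the proof reduces to this one-step gap argument together with the cited hardness of uniprocessor feasibility testing.
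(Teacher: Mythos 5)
Your proposal is correct and follows essentially the same route as the paper: a gap argument at $M^{*}=1$ showing that a $(2-\epsilon)$-approximation would decide uniprocessor EDF-feasibility, contradicting its co-${\cal NP}$-hardness \cite{DBLP:conf/soda/EisenbrandR10}. The paper phrases the integrality step as $\ceiling{\frac{2}{2-\epsilon}}=2$ rather than your "$M\leq 2-\epsilon<2$ forces $M=1$," but these are the same observation.
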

\begin{proof}
  Suppose that there exists such a polynomial-time algorithm ${\cal A}$ with
  approximation factor $2-\epsilon$. 
   $\cal{A}$ can be used to decide if a task set ${\bf T}$ is schedulable on a uniprocessor,
  which would contradict the $co{\cal NP}$-hardness \cite{DBLP:conf/soda/EisenbrandR10} of this problem.
  Indeed, we simply run $\cal{A}$ on the input instance.
  If $\cal{A}$ returns a feasible schedule using one processor, we already have a uniprocessor schedule. On the other hand,
  if $\cal{A}$ requires at least two processors, then we know that any optimum solution needs $\geq \ceiling{\frac{2}{2-\epsilon}} =
  2$ processors, implying that the task set ${\bf T}$ is not schedulable
  on a uniprocessor.
\end{proof}


\section{Non-Existence of APTAS  for Constrained Deadlines}
\label{sec:noAPTAS}
We now show that there is no APTAS when considering constrained-deadline task sets, unless
${\cal P}={\cal NP}$. 
The proof is based on an \mbox{L-reduction} (informally an approximation
preserving reduction) from a special case of the \emph{vector packing problem},
i.e., the 2D dominated vector packing problem. 

\subsection{The 2D Dominated Vector Packing Problem}

 The \emph{vector packing problem} is defined as follows:
\begin{definition}
{\it The vector packing problem:} Given a set ${\bf V}$ of vectors
$[v_1, v_2, \ldots, v_N]$ with $d$ dimensions, in which $1 \geq v_{i,j} \geq 0$ is
the value for vector $v_i$ in the $j$-th dimension, the problem is
to partition ${\bf V}$ into $M$ parts ${\bf V}_1,\ldots,{\bf V}_M$ such that $M$ is minimized and
each part ${\bf V}_m$ is feasible in the sense that  $\sum_{v_i\in {\bf V}_m} v_{i,j}\leq 1$ for all $1 \leq j \leq d$. That is, for each dimension $j$,
the sum of the $j$-th coordinates of the vectors in ${\bf V}_m$ is at most $1$. \hfill\myendproof
\end{definition}

We say that a subset ${\bf V}'$ of ${\bf V}$ can be \emph{feasibly
  packed in a bin} if $\sum_{v_i \in {\bf V}'} v_{i,j} \leq 1$ for all
$j$-th dimensions.
Note that for $d=1$ this is precisely the bin-packing problem.
The vector packing problem does not admit any
polynomial-time asymptotic approximation scheme even in the case of $d=2$ dimensions, unless ${\cal P}={\cal
  NP}$ \cite{DBLP:journals/ipl/Woeginger97}.

Specifically, the proof in~\cite{ChenECRTS12} for the non-existence of
APTAS for task sets with arbitrary deadlines comes from an L-reduction
from the $2$-dimensional vector packing problem as follows: For a
vector $v_i$ in ${\bf V}$, a task $\tau_i$ is created with 
$D_i=1$, $C_i = v_{i,2}$, and $T_i=\frac{v_{i,2}}{v_{i,1}}$.  However, a trivial
extension from \cite{ChenECRTS12} to constrained deadlines 
does not work, since for the transformation of the task set
we need to
assume that $v_{i,1} \leq v_{i,2}$ for any $v_i \in {\bf V}$ so that
$T_i \geq 1 = D_i$ for every reduced task $\tau_i$. This becomes
problematic, as one dimension in the vectors in such input
instances for the two-dimensional vector packing problem can be
totally ignored, and the input instance becomes a special case equivalent to the traditional bin-packing problem,
which admits an APTAS.
We will show that the hardness is equivalent to a special case
of the two-dimensional vector packing problem, called the \emph{
  two-dimensional dominated vector packing} (2D-DVP) problem, in
Section~\ref{sec:2d-dvp-versus-packing}. 
\begin{definition}
The\emph{
  two-dimensional dominated vector packing} (2D-DVP) problem is a
special case of the two-dimensional vector packing problem with 
following conditions for each vector $v_i \in {\bf V}$:
\begin{compactitem}
\item $v_{i,1} > 0$, and
\item if $v_{i,2} \neq 0$, then $v_{i,1}$ is dominated by $v_{i,2}$,
  i.e., $v_{i,2} > v_{i,1}$.
\end{compactitem}
Moreover, we further assume that $v_{i,1}$ and $v_{i,2}$ are rational numbers for every $v_i \in {\bf V}$. \hfill\myendproof
\end{definition}


Here, some tasks are created with implicit 
deadlines (based on vector $v_i$ if $v_{i,2}$ is $0$) and some 
tasks with strictly constrained deadlines (based on vector $v_i$ if
$v_{i,2}$ is not $0$).   
 However, the 2D-DVP problem is a special case of the
two-dimensional vector packing problem, and the implication for
$v_{i,2} > v_{i,1}$ when $v_{i,2} \neq 0$ does not hold in the proof in
\cite{DBLP:journals/ipl/Woeginger97}.
We note, that the proof for the non-existence of an APTAS
  for the two-dimensional vector packing problem in
  \cite{DBLP:journals/ipl/Woeginger97} is erroneous. However,
  the result still holds. Details are in the Appendix\ifbool{techreport}{.}{ in~\cite{packing-arxiv-2018}.}
Therefore,
we will provide a proper $L$-reduction
in Section~\ref{sec:2d-dvp-hard} to show the non-existence of
APTAS for the multiprocessor partitioned packing problem for tasks
with constrained deadlines.

\subsection{2D-DVP Problem and Packing Problem}
\label{sec:2d-dvp-versus-packing}


 We now show that the packing problem is at least as hard as the 2D-DVP problem
 from a complexity point of view.
For vector $v_i$ with $v_{i,2} > v_{i,1}$, we create a
corresponding task $\tau_i$ with
\[
D_i = 1,~~ C_i = v_{i,2},~~ T_i = \frac{v_{i,2}}{v_{i,1}}.
\]
Clearly, $D_i < T_i$ for such tasks. Let $H$ be a \emph{common multiple}, not
necessary the least, of the periods $T_i$ of 
the tasks  constructed above.
By the assumption that all the values in the 2D-DVP problem are rational numbers
and $v_{i,1} > 0$ for every vector $v_i$, we know that $H$ exists and can be
calculated in $O(N)$. 
 For vector $v_i$ with $v_{i,2} = 0$, we create a corresponding
implicit-deadline task $\tau_i$ with 
\[ 
T_i = D_i = H, ~~C_i = v_{i,1}T_i.
\]
 The following lemma shows the related schedulability condition.

\begin{lemma}
  \label{lemma:common-deadline}
  Suppose that the set ${\bf T}_m$ of tasks assigned on a processor
  consists of (1) strictly constrained-deadline tasks, denoted by 
  ${\bf T}_m^{<}$, with a common relative deadline $1=D$ and
  (2) implicit-deadline tasks, i.e., ${\bf T}_m \setminus{\bf T}_m^{<}$,
  in which the period is a common integer multiple $H$ of the periods of the
  strictly constrained-deadline tasks.  EDF schedule is feasible for the set
  ${\bf T}_m$ of tasks on a processor if and only if
  \[
   \sum_{\tau_i \in {\bf T}_m^{<}} C_i \leq 1 \mbox{ and }    \sum_{\tau_i \in {\bf T}_m} u_i \leq 1.
   \]
\end{lemma}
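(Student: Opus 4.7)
The plan is to invoke the demand bound function (DBF) characterization of EDF feasibility on a uniprocessor: the set ${\bf T}_m$ is feasibly scheduled by EDF if and only if $\sum_{\tau_i \in {\bf T}_m} {\sc dbf}(\tau_i, t) \leq t$ for every $t \geq 0$. Necessity of both conditions is then immediate. Evaluating the DBF sum at $t = 1$ gives exactly $\sum_{\tau_i \in {\bf T}_m^<} C_i$, since every implicit-deadline task has relative deadline $H > 1$ and therefore contributes $0$; this forces $\sum_{\tau_i \in {\bf T}_m^<} C_i \leq 1$. The utilization bound $\sum_{\tau_i \in {\bf T}_m} u_i \leq 1$ then follows by dividing the DBF inequality by $t$ and letting $t \to \infty$, using the standard estimate ${\sc dbf}(\tau_i, t) \geq u_i t - C_i$.

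For sufficiency, let $U^< = \sum_{\tau_i \in {\bf T}_m^<} u_i$ and $U = \sum_{\tau_i \in {\bf T}_m} u_i$, and proceed in two steps. First I would handle $0 \leq t < H$, where only the strictly constrained tasks contribute. For each such $\tau_i$, directly from Eq.~\eqref{eq:dbf} one has ${\sc dbf}(\tau_i, t) \leq u_i(t-1) + C_i$ for $t \geq 1$. Summing yields
\[
\sum_{\tau_i \in {\bf T}_m^<} {\sc dbf}(\tau_i, t) \;\leq\; (t-1)U^< + \sum_{\tau_i \in {\bf T}_m^<} C_i \;\leq\; (t-1) + 1 \;=\; t,
\]
using both $U^< \leq U \leq 1$ and the hypothesis $\sum_{\tau_i \in {\bf T}_m^<} C_i \leq 1$.

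Second, for $t \geq H$, I would write $t = qH + s$ with integer $q \geq 1$ and $0 \leq s < H$, and exploit that $H/T_i$ is a positive integer for every $\tau_i \in {\bf T}_m^<$, together with $T_i > D_i = 1$. A direct evaluation of the floor in Eq.~\eqref{eq:dbf} yields the decomposition ${\sc dbf}(\tau_i, qH + s) = q(H/T_i)\,C_i + {\sc dbf}(\tau_i, s)$ for each $\tau_i \in {\bf T}_m^<$, whereas each implicit-deadline task contributes exactly $q C_i$ (using $0 \leq s < H$ and $T_i = H$). Summing across both kinds,
\[
\sum_{\tau_i \in {\bf T}_m} {\sc dbf}(\tau_i, t) \;=\; qH\,U \;+\; \sum_{\tau_i \in {\bf T}_m} {\sc dbf}(\tau_i, s).
\]
The first step bounds the residual sum by $s$ (since $s < H$), and $U \leq 1$ bounds the first term by $qH$, so the total is at most $qH + s = t$, closing the argument.

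The main technical step to be careful about is the DBF decomposition in the second step: one must verify that $\lfloor (qH + s - 1)/T_i \rfloor = q(H/T_i) + \lfloor (s-1)/T_i \rfloor$ uniformly in $s \in [0, H)$, which crucially uses $H/T_i \in \mathbb{Z}_+$ together with the strict inequality $T_i > 1$ (so that $1/T_i < 1$ never introduces an extra unit in the floor near $s \in \{0,1\}$). Once this decomposition is in hand, the two hypotheses of the lemma are exactly what is needed to close, respectively, the ``burst'' inequality at $t = 1$ within a single hyperperiod and the ``rate'' inequality across successive hyperperiods of length $H$.
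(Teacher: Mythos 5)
Your proposal is correct and follows essentially the same route as the paper's proof: necessity from the dbf test, the linear bound $\mathrm{dbf}(\tau_i,t)\leq u_i(t-1)+C_i$ on $[1,H)$, and exact periodicity over hyperperiods using that $H/T_i$ is a positive integer (your explicit $t=qH+s$ decomposition is just the unrolled form of the paper's one-step identity $\sum_i \mathrm{dbf}(\tau_i,t)=\sum_i \mathrm{dbf}(\tau_i,t-H)+\sum_i \mathrm{dbf}(\tau_i,H)$ with $\sum_i \mathrm{dbf}(\tau_i,H)=HU$). Your verification of the floor identity near $s\in[0,1)$ is a welcome extra level of care, but it is not a departure from the paper's argument.
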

\begin{proof}
  {\bf Only if}: This 
  is straightforward as the task set 
  cannot meet the timing constraint when $\sum_{\tau_i \in {\bf
      T}_m^{<}} \frac{C_i}{D} > 1$ or $\sum_{\tau_i \in {\bf T}_m} u_i
  > 1$.

  {\bf If}: If $\sum_{\tau_i \in {\bf T}_m^{<}} \frac{C_i}{D} \leq 1$ and
  $\sum_{\tau_i \in {\bf T}_m} u_i \leq 1$, we know that when $t < D$, then 
  $\sum_{\tau_i \in {\bf T}_m}{\sc dbf}(\tau_i, t) = 0$.
  When $D\leq t < H$, we have
\begin{align}
    \label{eq:dbf-sum-common-deadline}
    &\sum_{\tau_i \in {\bf T}_m}{\sc dbf}(\tau_i, t) =  \sum_{\tau_i
      \in {\bf T}_m^{<}}\left(\floor{\frac{t-D}{T_i}}+1\right)\times C_i
    \leq \sum_{\tau_i
      \in {\bf T}_m^{<}}\left(\frac{t-D}{T_i}+1\right)\times C_i \nonumber\\
     \leq & \sum_{\tau_i
      \in {\bf T}_m^{<}} C_i + (t-D)u_i 
   \leq   D + (t-D) = t.
  \end{align}
  Moreover, with $\sum_{\tau_i \in {\bf T}_m} u_i \leq 1$, we know
  that when $t = H$
\begin{align}
    \label{eq:dbf-sum-multiple}
    \sum_{\tau_i \in {\bf T}_m}{\sc dbf}(\tau_i, H) 
   = &  \sum_{\tau_i
      \in {\bf T}_m^{<}}\left(\floor{\frac{H-D}{T_i}}+1\right)\times C_i + \sum_{\tau_i
      \in {\bf T}_m\setminus{\bf T}_m^{<}} \frac{H}{T_i}C_i\nonumber\\
   =_1 &  \sum_{\tau_i
      \in {\bf T}_m^{<}}\frac{H}{T_i} C_i + \sum_{\tau_i
      \in {\bf T}_m\setminus{\bf T}_m^{<}} \frac{H}{T_i}C_i
   =   H\left(\sum_{\tau_i
      \in {\bf T}_m} u_i\right) \leq H,\nonumber
  \end{align}
  where $=_1$ comes from the fact that $\frac{H}{T_i}$ is an
  integer for any $\tau_i$ in ${\bf T}_m^{<}$ and $T_i > D > 0$ so that
  $\floor{\frac{H-D}{T_i}}+1$ is equal to $\frac{H}{T_i}$.

  For any value $t > H$, the value of  $\sum_{\tau_i \in {\bf T}_m}{\sc
    dbf}(\tau_i, t)$ is equal to \\\mbox{$\sum_{\tau_i \in {\bf T}_m}{\sc
    dbf}(\tau_i, t-H) + \sum_{\tau_i \in {\bf T}_m}{\sc dbf}(\tau_i,
  H)$.} Therefore, we know that if $\sum_{\tau_i \in {\bf T}_m^{<}}
  \frac{C_i}{D} \leq 1$ and $\sum_{\tau_i \in {\bf T}_m} u_i \leq 1$,
  the task set ${\bf T}_m$ can be feasibly scheduled by EDF.
\end{proof}

\begin{theorem}
  \label{thm:hardness-equivalent}
  If there does not exist any APTAS for the \mbox{2D-DVP} problem, unless ${\cal
  P} = {\cal NP}$, there also does not exist any APTAS for the multiprocessor partitioned packing
  problem with constrained-deadline task sets.
\end{theorem}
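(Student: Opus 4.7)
The plan is to use the explicit construction already sketched just before the theorem as an L-reduction, and then show, via Lemma~\ref{lemma:common-deadline}, that feasible packings of the vectors and feasible EDF partitions of the reduced task set correspond one-to-one on the same number of bins/processors, so optima and approximate solutions carry over verbatim.

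First I would fix the reduction. Given a 2D-DVP instance ${\bf V}$, build a task set ${\bf T}$ as follows: for each vector $v_i$ with $v_{i,2}>v_{i,1}>0$, create a strictly constrained-deadline task $\tau_i$ with $D_i=1$, $C_i=v_{i,2}$, $T_i=v_{i,2}/v_{i,1}$; for each vector $v_i$ with $v_{i,2}=0$, create an implicit-deadline task with $T_i=D_i=H$, $C_i=v_{i,1}H$, where $H$ is a common multiple of the $T_i$ of the strictly constrained tasks. Rationality of the input and $v_{i,1}>0$ ensure $H$ is well-defined and of polynomial bit-length, and the whole construction is polynomial. Note the key identity: for a strictly constrained task, $u_i = v_{i,2}/(v_{i,2}/v_{i,1}) = v_{i,1}$, and for an implicit-deadline task, $u_i=v_{i,1}$ too; meanwhile $C_i = v_{i,2}$ for the strictly constrained tasks while implicit-deadline tasks have $v_{i,2}=0$, so they contribute $0$ to $\sum_{\tau_i\in {\bf T}_m^{<}} C_i$.

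Next I would transfer feasibility. Any candidate partition ${\bf T}_1,\dots,{\bf T}_M$ of ${\bf T}$ gives a candidate partition ${\bf V}_1,\dots,{\bf V}_M$ of ${\bf V}$ (and vice versa), and applying Lemma~\ref{lemma:common-deadline} to each ${\bf T}_m$ shows that ${\bf T}_m$ is EDF-feasible on one processor if and only if
\[
\sum_{v_i\in {\bf V}_m} v_{i,2} \;=\; \sum_{\tau_i\in {\bf T}_m^{<}} C_i \;\leq\; 1
\quad\text{and}\quad
\sum_{v_i\in {\bf V}_m} v_{i,1} \;=\; \sum_{\tau_i\in {\bf T}_m} u_i \;\leq\; 1,
\]
which are precisely the two dimension constraints for ${\bf V}_m$ to be a valid bin. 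Hence the minimum number of processors needed for ${\bf T}$ equals the minimum number of bins needed for ${\bf V}$, and any partition achieving $M$ processors for ${\bf T}$ yields a packing of ${\bf V}$ into exactly $M$ bins, and conversely.

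Finally I would lift the APTAS inapproximability. If the multiprocessor partitioned packing problem for constrained-deadline tasks admitted an APTAS, then on input ${\bf T}$ constructed from any ${\bf V}$ it would produce a partition on $(1+\epsilon)M^*+O(1)$ processors; by the above equivalence this translates to a packing of ${\bf V}$ into $(1+\epsilon)M^*+O(1)$ bins in polynomial time, contradicting the assumed non-existence of an APTAS for 2D-DVP. The main obstacle is really just Lemma~\ref{lemma:common-deadline}, which is already proved; beyond that, the only technical point to verify carefully is that $H$ can be represented (and manipulated) in polynomial size so that the implicit-deadline tasks' parameters $C_i=v_{i,1}H$ and $T_i=H$ remain polynomial in the input size, which follows from rationality of the $v_{i,j}$.
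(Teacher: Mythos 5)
Your proposal is correct and follows essentially the same route as the paper: the same vector-to-task reduction, the same appeal to Lemma~\ref{lemma:common-deadline} to establish that a set of tasks is EDF-feasible on one processor exactly when the corresponding vectors fit in one bin, and the same conclusion that the reduction is approximation-preserving. Your additional checks (that $u_i=v_{i,1}$ in both cases and that $H$ has polynomial bit-length) are sound and only make explicit what the paper leaves implicit.
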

\begin{proof}
  Clearly, the reduction in this section from the 2D-DVP problem to the
  multiprocessor partitioned packing problem with constrained
  deadlines is in polynomial time.

  For a task subset ${\bf T}'$ of ${\bf T}$, suppose that ${\bf
    V}({\bf T}')$ is the set of the corresponding vectors that are
  used to create the task subset ${\bf T}'$.  By
  Lemma~\ref{lemma:common-deadline}, the subset ${\bf T}_m$ of the
  constructed tasks can be feasibly scheduled by EDF on a processor if
  and only if $\sum_{\tau_i \in {\bf T}_m^{<}} C_i =
  \sum_{\tau_i \in {\bf V}({\bf T}_m)} v_{i,2}\leq 1$ and
  $\sum_{\tau_i \in {\bf T}_m} u_i = \sum_{\tau_i \in {\bf V}({\bf
      T}_m)} v_{i,1} \leq 1$.

  Therefore, it is clear that the above reduction is a perfect
  approximation preserving reduction. That is, an algorithm with a
  $\rho$ (asymptotic) approximation factor for the multiprocessor
  partitioned packing problem can easily lead to a $\rho$ (asymptotic)
  approximation factor for the 2D-DVP problem.
\end{proof}

\subsection{Hardness of the 2D-DVP problem}
\label{sec:2d-dvp-hard}

Based on Theorem~\ref{thm:hardness-equivalent}, we are going to show
that there does not exist APTAS for the 2D-DVP problem, which also
proves the non-existence of APTAS for the multiprocessor partitioned
packing problem with constrained deadlines.  

\begin{theorem}
  \label{thm:apx-hard-2-d}
  There does not exist any APTAS for the 2D-DVP problem, unless ${\cal
    P}={\cal NP}$.
\end{theorem}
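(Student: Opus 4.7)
The plan is to establish Theorem~\ref{thm:apx-hard-2-d} by an L-reduction from a gap-preserving version of three-dimensional matching (3DM), in which it is ${\cal NP}$-hard to distinguish between instances admitting a perfect matching of size $q$ and those whose maximum matching has size at most $(1-\delta)q$ for some fixed $\delta>0$. This parallels the Woeginger-style reduction to general two-dimensional vector packing (corrected in the appendix), but the vector construction must now respect the 2D-DVP conditions $v_{i,1}>0$ and either $v_{i,2}=0$ or $v_{i,2}>v_{i,1}$.

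The construction would use two classes of vectors. For each element $e\in X\cup Y\cup Z$ I would introduce an \emph{element vector} with a very small first coordinate $v_{i,1}>0$ and a second coordinate $v_{i,2}\approx 1/3$ perturbed by an identity-dependent term encoding $e$'s role; by making $v_{i,1}$ much smaller than $v_{i,2}$, the dominance $v_{i,2}>v_{i,1}$ is maintained. For each triple $t\in{\cal F}$ (where ${\cal F}$ denotes the family of 3DM triples) I would introduce a \emph{triple vector} with $v_{i,2}=0$ and first coordinate close to $1$, so that a triple vector can share a bin with exactly three element vectors; since $v_{i,2}=0$ the dominance condition is vacuously satisfied.

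The arithmetic encoding in the second coordinate would be arranged so that three element vectors fit together in dimension~2 if and only if they correspond to some triple of~${\cal F}$. A perfect matching of size $q$ then induces a packing into exactly $|{\cal F}|$ bins ($q$ bins each containing one matched triple vector and its three element vectors, plus $|{\cal F}|-q$ bins for the remaining triple vectors). Conversely, any matching leaving $\Omega(q)$ elements uncovered forces $\Omega(q)$ extra bins because the uncovered elements cannot be compactly packed. This yields the L-reduction: a $(1+\epsilon)$-approximation for 2D-DVP would translate into a matching of size $(1-O(\epsilon))q$, contradicting the gap-hardness of 3DM and ruling out any APTAS.

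The main obstacle is the arithmetic design of the second coordinates so that precisely the matching triples realize a packing of three element vectors with total weight at most~$1$ in dimension~2. Standard large-base ``slot'' encodings accomplish this in the unrestricted 2D vector packing setting, but here they must be carried out inside the dominating coordinate, with the first coordinate engineered to keep $v_{i,1}>0$ for elements without breaching the dimension-1 capacity once a triple vector is present. Verifying that all numerical choices simultaneously satisfy these constraints and preserve the additive gap under L-reduction is the technically involved but ultimately routine part of the argument.
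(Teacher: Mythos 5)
There is a genuine gap, and it lies exactly in the step you defer as ``technically involved but ultimately routine'': the arithmetic design of the second coordinates. You give the triple vectors $v_{i,2}=0$ and ask the second dimension alone to certify, among three \emph{element} vectors, that the corresponding $(x_i,y_j,z_k)$ is a triple of ${\cal F}$. That is impossible in principle. If there were labels $f,g,h$ and a threshold $c$ with $f(x_i)+g(y_j)+h(z_k)\leq c$ exactly for $(x_i,y_j,z_k)\in{\cal F}$, then for any two triples $(x_1,y_1,z_1),(x_2,y_2,z_2)\in{\cal F}$ with $(x_1,y_2,z_1),(x_2,y_1,z_2)\notin{\cal F}$ you would get $f(x_1)+f(x_2)+g(y_1)+g(y_2)+h(z_1)+h(z_2)$ simultaneously $\leq 2c$ and $>2c$ --- a contradiction, and such configurations are generic in hard 3DM instances. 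The classical Woeginger/Garey--Johnson encoding avoids this precisely because the \emph{triple} contributes its own number $s_\ell'=r^4-kr^3-jr^2-ir+8$ to the sum, so that the four-way identity $x_i'+y_j'+z_k'+s_\ell'=b$ (Lemma~\ref{lemma:4values}) can be enforced. Once you zero out the triple vectors' second coordinate, that mechanism is gone and no threshold on the remaining three numbers can recover it.

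The paper's construction resolves the tension you correctly identified (the dominance condition $v_{i,2}>v_{i,1}$ versus the need for a two-sided equality test) differently: \emph{both} the element vectors and the triple vectors get strictly dominated positive coordinates, with the encoding integers appearing with coefficient $+\tfrac{3}{4\cdot 5b}$ in the first coordinate and $-\tfrac{1}{5b}$ in the second, so that $\sum v_{i,1}\leq 1$ forces $\sigma\leq b$ while $\sum v_{i,2}\leq 1$ forces $\sigma\geq b$ (Lemma~\ref{lemma:5vectors}). The vectors with zero second coordinate are a separate family of $|{\bf S}|$ identical dummies $(0.25,0)$; they play no role in the arithmetic and exist only to pad every ``good'' bin to five vectors and every leftover bin to four, which is what makes the bin count $\frac{3q+2|{\bf S}|-\eta}{4}$ an exact linear function of the matching size $\eta$ and hence makes the reduction an L-reduction. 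Your counting also inherits a problem from the flawed encoding: with triple vectors of first coordinate near $1$, the ``no'' case does not obviously cost $\Omega(q)$ extra bins, since uncovered element vectors (tiny first coordinate) might be absorbed in ways your dimension-2 test cannot forbid. To repair the proposal you would have to move the triple's identity back into a strictly positive, dominated second coordinate --- at which point you essentially arrive at the paper's construction.
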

\begin{proof}
  This is proved by an L-reduction, following a similar strategy in
  \cite{DBLP:journals/ipl/Woeginger97} by constructing an L-reduction
  from the Maximum Bounded 3-Dimensional Matching (MAX-3-DM), which is
  MAX SNP-complete \cite{Kann:1991:MBM:105391.105396}. Details are in the Appendix\ifbool{techreport}{,}{ in~\cite{packing-arxiv-2018},} where a short comment regarding an erroneous
  observation in~\cite{DBLP:journals/ipl/Woeginger97} is also
  provided.
\end{proof}
The following theorem results from 
Theorems~\ref{thm:hardness-equivalent} and \ref{thm:apx-hard-2-d}.

\begin{theorem}
  \label{thm:apx-hard}
  There does not exist any APTAS for the multiprocessor partitioned
  packing problem for constrained-deadline task sets, unless
  ${\cal P}={\cal NP}$.
\end{theorem}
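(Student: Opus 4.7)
The plan is to obtain Theorem~\ref{thm:apx-hard} as an immediate corollary of the two preceding results. First I would invoke Theorem~\ref{thm:apx-hard-2-d}, which gives the unconditional hardness: the 2D-DVP problem admits no APTAS unless $\mathcal{P}=\mathcal{NP}$. Then I would apply Theorem~\ref{thm:hardness-equivalent}, which is precisely the conditional implication needed, namely that the non-existence of an APTAS for 2D-DVP (assuming $\mathcal{P}\neq\mathcal{NP}$) transfers to the non-existence of an APTAS for the multiprocessor partitioned packing problem on constrained-deadline task sets. Chaining these two statements immediately yields the desired conclusion.

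To make the chaining concrete at the level of approximation ratios, I would briefly trace the reduction from Section~\ref{sec:2d-dvp-versus-packing}. Given an arbitrary instance ${\bf V}$ of 2D-DVP, the construction produces a constrained-deadline task set ${\bf T}$ in polynomial time, and by Lemma~\ref{lemma:common-deadline} a subset ${\bf T}_m$ is EDF-schedulable on one processor if and only if the corresponding vectors fit in one bin in each coordinate. Hence any feasible partition of ${\bf T}$ into $M$ subsets corresponds bijectively to a feasible packing of ${\bf V}$ into $M$ bins, and vice versa. Consequently, an algorithm producing at most $(1+\epsilon)M^{*}+O(1)$ processors for every constrained-deadline instance would, via this correspondence, yield an algorithm producing at most $(1+\epsilon)M^{*}+O(1)$ bins for every 2D-DVP instance, contradicting Theorem~\ref{thm:apx-hard-2-d}.

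The main obstacle, as this discussion suggests, is not in Theorem~\ref{thm:apx-hard} itself but in the two feeder statements it rests on. The harder of the two is Theorem~\ref{thm:apx-hard-2-d}: to obtain hardness specifically for the restricted 2D-DVP setting one must build an L-reduction (following the MAX-3-DM strategy of \cite{DBLP:journals/ipl/Woeginger97}) that additionally forces the second coordinate to dominate the first whenever it is nonzero, a constraint that is not automatic in the classical two-dimensional vector packing hardness proof. Once those two preceding theorems are available, Theorem~\ref{thm:apx-hard} reduces to a one-line composition of implications.
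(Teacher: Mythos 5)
Your proposal is correct and matches the paper exactly: the paper derives Theorem~\ref{thm:apx-hard} as an immediate consequence of Theorem~\ref{thm:hardness-equivalent} and Theorem~\ref{thm:apx-hard-2-d}, which is precisely the composition you describe. Your additional tracing of the reduction via Lemma~\ref{lemma:common-deadline} is consistent with the paper's Section~\ref{sec:2d-dvp-versus-packing} and adds no divergence.
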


\section{Concluding Remarks}
\label{sec:conclusion}

This paper studies the partitioned multiprocessor packing problem to
minimize the number of processors needed for multiprocessor partitioned scheduling.
Interestingly, there turns out to be a
huge difference (technically) in whether one is allowed faster
processors or additional processors. Our results are summarized in
Table~\ref{tab:summary}.
For general cases, the upper bound and lower bound for the first-fit
strategy in the deadline-monotonic partitioning algorithm are both
open. 
The focus of this paper is the multiprocessor partitioned packing
problem. If \emph{global scheduling} is allowed, in which a job can be
executed on different processors, the problem of minimizing the
number of processors has been also recently studied in a more general
setting by Chen et
al.~\cite{DBLP:conf/spaa/0009MS16,DBLP:conf/soda/ChenMS16} and Im et
al.~\cite{RTSS-Pruhs-2017}. They do not explore any periodicity of the
job arrival patterns.  Among them, the state-of-the-art online
competitive algorithm has an approximation factor (more precisely,
competitive factor) of $O(\log \log M)$ by Im et
al. \cite{RTSS-Pruhs-2017}.  These results are unfortunately not
applicable for the multiprocessor partitioned packing problem since
the jobs of a sporadic task may be executed on different processors.

\bibliographystyle{plainurl}
\bibliography{real-time}

\ifbool{techreport}{}{\end{document}}

\clearpage
\section*{Appendix}

\subsection*{Proofs related to Section~\ref{sec:deadline-monotonic}}

\begin{proofAppendix}{Theorem~\ref{thm:best-fit}}
We provide a task set that can be scheduled on two processors 
  but where Algorithm~\ref{algo:fitting} when applying the best-fit strategy uses $\frac{N}{2}$
  processors. Let $K \geq 4$ be an integer, $N$ is $2K$, and $H$ is sufficiently
  large, i.e., $H \gg K^K$.
  \begin{compactitem}
  \item Let $D_1=1$, $C_1=1/K$, and $T_1=H$.
  \item For  $i=2, 4, \ldots,2K$, let $D_i = K^{\frac{i}{2}-1}$,
    $C_i = K^{\frac{i}{2}-2}$, and $T_i=D_i$.
  \item For $i=3, 5, \ldots,2K-1$, let $D_i =
    K^{\frac{i-1}{2}}$, $C_i = K^{\frac{i-1}{2}} -
    K^{\frac{i-1}{2}-1}$, and $T_i=H$.
  \end{compactitem}
  
  Hence, in this input instance, $D_1=D_2=1$,
  $D_3=D_4=K$, $\cdots$, $D_{2K-1} = D_{2K}=K^{K-1}$. For the
  simplicity of presentation, we will omit any term
  multiplied with $1/H$ by assuming that this is positive and
  arbitrarily small.  When applying 
  DM partitioning,
  tasks $\tau_1$ and $\tau_2$ are both assigned on processor
  $1$. Then, we know that at time $t \geq 1$, ${\sc dbf}^*(\tau_1,
  t)+{\sc dbf}^*(\tau_2 , t) \approx \frac{1}{K} + \frac{t}{K}$. Clearly,
  $\tau_3, \tau_5, \tau_7, \ldots, \tau_{2K-1}$ are not eligible for
  processor $1$, because for $i=1,2,\ldots,K-1$ we have
\begin{align}
    & C_{2i+1} + {\sc dbf}^*(\tau_1, D_{2i+1})+{\sc dbf}^*(\tau_2 ,
    D_{2i+1}) \nonumber \\ \approx~ &K^{i}-K^{i-1} +\frac{1}{K} +\frac{K^i}{K} >
    K^i = D_{2i+1}.
 \end{align}

  Therefore, $\tau_3$ is assigned on processor $2$. When
  considering $\tau_4$, both processors are feasible, and processor
  $2$ has a higher approximate demand at time $D_4$, i.e., $ {\sc
    dbf}^*(\tau_1, D_4)+{\sc dbf}^*(\tau_2 , D_4) \approx \frac{1}{K}
  + \frac{K}{K}$ and $ {\sc dbf}^*(\tau_3, D_4) = C_3=K-1$. Therefore,
  $\tau_4$ is assigned on processor $2$ under the best-fit
  strategy. Similarly, $\tau_5, \tau_7, \ldots, \tau_{2K-1}$ are not
  eligible for processor $2$, because for $i=2,3,\ldots,K-1$ we have
 \begin{align}
& C_{2i+1} + {\sc dbf}^*(\tau_3, D_{2i+1})+{\sc dbf}^*(\tau_4 ,
D_{2i+1}) 
\nonumber \\ \approx~ & K^{i}-K^{i-1} + C_3 + \frac{K^i}{K} > K^i = D_{2i+1}.    
  \end{align}
  When considering $\tau_6$, the allocated three processors are all
  feasible, but processor $3$ has a higher approximate demand at time
  $D_6$. One can formally prove that task $\tau_{2i+1}$ is assigned to
  processor $i+1$ because $C_{2i+1} + {\sc dbf}^*(\tau_{2j+1},
  D_{2i+1})+{\sc dbf}^*(\tau_{2j+2} , D_{2i+1}) > D_{2i+1}$ for any
  $j=0,1,\ldots,i-1$. Moreover, since ${\sc dbf}^*(\tau_{2j+1},
  D_{2i+2})+{\sc dbf}^*(\tau_{2j+2} , D_{2i+2}) \approx C_{2j+1} +
  K^i/K < K^i - K^{i-1} = {\sc dbf}^*(\tau_{2i+1}, D_{2i+2})$ for any
  $1 \leq i \leq K-1$ and $j=0,1,\ldots,i-1$ due to the assumption $K
  \geq 4$, we know that processor $i+1$ has the highest approximate
  demand at time $D_{2i+2}$ among the first $i+1$ (allocated)
  processors. Thus, task $\tau_{2i+2}$ is assigned to processor
  $i+1$ due to the best-fit strategy.
  Therefore, we conclude that the best-fit strategy assigns $\tau_i$
  to processor $\ceiling{\frac{i}{2}}$. The resulting solution uses
  $K$ processors.

  Now, consider the following task assignment, in which $\tau_i$ is
  assigned on processor $1$ (resp., $2$) if $i$ is an odd (resp. even)
  number. Let ${\bf T}'_m$ be the set of tasks that are assigned on
  processor $m$. The assignment is feasible on processor $2$,
  as all the tasks are with implicit deadlines, and the total
  utilization is $100\%$. 
  The assignment is also feasible on processor $1$ by verifying the
  schedulability by using ${\sc dbf}$, i.e., the demand bound function without
  approximation! Since all tasks in ${\bf T}'_1$ have the same
  period, we only have to verify 
  ${\sc dbf}$ at time
  $1, K, K^2, K^3, \ldots K^{K-1}$, in which $\sum_{\tau_i \in {\bf T}'_1}
  {\sc dbf}(\tau_i, K^k) = K^k-1+\frac{1}{K}$ for $k=1,2,\ldots,K-1$.
   
  We will now show that when $t > K^{K-1}$, the ${\sc dbf}$ 
  of ${\bf T}'_1$ at time $t$ will still be no more than $t$, 
 i.e., showing that $\sum_{\tau_i \in {\bf T}'_1} {\sc dbf}(\tau_i, t)
    \leq t, \forall t > 0$.
  Since the $N/2=K$ tasks in ${\bf T}'_1$ have the same period, for the
  simplicity of presentation, let $T$ be $H$ with $T \gg K^K$.  We can divide the
  time interval $[0, \infty]$ into $[0, D_1), [D_1, D_3), \ldots,
  [D_{N-3}, D_{N-1}), [D_{N-1}, T), [T, T+D_1], [T+D_1, T+D_3],
  \ldots$. Suppose that $\ell$ is a non-negative integer and $j$ is an
  index $j \in \setof{1,3,5,\ldots, N-1,N+1}$, where $t$ is in interval
  $[\ell T + D_{j-2}, \ell T + D_j)$. Here, $D_{-1}$ is an auxiliary
  parameter set to $0$ and $D_{N+1}$ is an auxiliary parameter set to
  $T$ for brevity.

  Then, due to the
  parameters of task $\tau_i$ and $t \in [\ell T + D_{j-2}, \ell T +
  D_j)$, for task $\tau_i \in {\bf T}'_1$, we have $dbf(\tau_i, t) =
  (\ell+1) C_i \mbox{ if } i < j$ and $dbf(\tau_i, t) = \ell C_i
  \mbox{ if } j \leq i \leq N$. As a result, when $j \in
  \setof{5,7,\ldots,N-1}$ and $t \in [\ell T + D_{j-2}, \ell T +
  D_j)$, we have
  \begin{align*}
    \;\;& \sum_{\tau_i \in {\bf T}'_1} dbf(\tau_i, t) = \ell
    \sum_{\tau_i \in {\bf T}'_1} C_i + \sum_{\tau_i \in {\bf T}'_1
      \mbox{ and } i < j} C_i \\
\;\;= &\ell \left(\frac{1}{K}+ \sum_{i=1}^{K-1} K^i -
    K^{i-1}\right) + \left(\frac{1}{K}+ \sum_{i=1}^{(j-2-1)/2} K^i -
    K^{i-1}\right)\\
=\;\; &
\ell (K^{K-1}-1+\frac{1}{K}) + K^{(j-3)/2} - 1 + \frac{1}{K} \\
\leq \;\; &\ell T + D_{j-2}
  \end{align*}
  Moreover, when $j=1$, we have $\sum_{\tau_i \in {\bf T}'_1}
  dbf(\tau_i, t) = \ell (K^{K-1}-1+\frac{1}{K}) \leq \ell T$.  When
  $j=3$, we have $\sum_{\tau_i \in {\bf T}'_1} dbf(\tau_i, t) = \ell
  (K^{K-1}-1+\frac{1}{K}) + \frac{1}{K} \leq \ell T+D_1$.  When $j=N+1$, we
  have $\sum_{\tau_i \in {\bf T}'_1} dbf(\tau_i, t) = \ell
  (K^{K-1}-1+\frac{1}{K}) + K^{K-1}-1+\frac{1}{K} \leq \ell
  T+D_{N-1}$. Therefore, we reach the conclusion that $\sum_{\tau_i
    \in {\bf T}'_1} {\sc dbf}(\tau_i, t) \leq t, \forall t > 0$.


  Hence, there exists a feasible solution by using only $2$
  processors, but the DM 
  partitioning algorithm under BF
  uses $\frac{N}{2}$ processors. 
\end{proofAppendix}

\begin{proofAppendix}{Theorem~\ref{thm:worst-fit}}
  Suppose that $K$ is an integer, $N$ is $2K$, and $H$
  is sufficiently large, i.e., $H \gg K^K$.  
  Consider the following input task set:
  \begin{compactitem}
  \item Let $D_1=1$, $C_1=1$, and $T_1=H$.
  \item For  $i=2, 4, \ldots,2K$, let $D_i = K^{\frac{i}{2}}$,
    $C_i = K^{\frac{i}{2}-1}$, and $T_i=D_i$.
  \item For $i=3, 5, \ldots,2K-1$, let $D_i =
    K^{\frac{i-1}{2}}$, $C_i = K^{\frac{i-1}{2}} -
    K^{\frac{i-1}{2}-1}$, and $T_i=H$.
  \end{compactitem}

  We know that $D_2=D_3=K$, $D_4=D_5=K^2$, $\cdots$,
  $D_{2K-2} = D_{2K-1}$.  The proof is very similar to that of
  Theorem~\ref{thm:best-fit}. For the simplicity of presentation, we
  will omit any term multiplied with $1/H$ by assuming that
  this is positive and arbitrarily small.

  When applying DM partitioning, task $\tau_1$ and
  $\tau_2$ are both assigned on processor $1$.  One can formally prove
  that task $\tau_{2i+1}$ is assigned to processor $i+1$ because
  $C_{2i+1} + {\sc dbf}^*(\tau_{2j+1}, D_{2i+1})+{\sc
    dbf}^*(\tau_{2j+2} , D_{2i+1}) > D_{2i+1}$ for any
  $j=0,1,\ldots,i-1$. Moreover, since ${\sc dbf}^*(\tau_{2j+1},
  D_{2i+2})+{\sc dbf}^*(\tau_{2j+2} , D_{2i+2}) \approx C_{2j+1} +
  K^{i+1}/K > K^i - K^{i-1} \approx {\sc dbf}^*(\tau_{2i+1},
  D_{2i+2})$ for any $1 \leq i \leq K-1$ and $j=0,1,\ldots,i-1$, we
  know that processor $i+1$ has the lowest approximate demand at time
  $D_{2i+2}$ among the first $i+1$ (allocated) processors. Therefore,
  task $\tau_{2i+2}$ is assigned to processor $i+1$ due to the
  worst-fit strategy, and $N/2$ processors are allocated.

  Similarly, assigning $\tau_i$ on processor $1$ (resp., $2$) if $i$
  is an odd (resp. even) number is a feasible solution using two
  processors.  
\end{proofAppendix}

\begin{proofAppendix}{Theorem~\ref{thm:gamma-small-2}}
  Suppose that the system allocates the last processor when
  considering a certain task $\tau_\ell$.  
  If $\ell$ is $1$, 
  the  solution is optimal.  We consider $\ell \geq 2$.  That is, when
  considering $\tau_\ell$, for any $m \in \setof{1,2,\ldots,M}$,
  either the condition in Eq.~\eqref{eq:demand-bound-test} or
  Eq.~\ref{eq:utilization-bound-test} is
  violated. Algorithm~\ref{alg:dmp} hence uses $M+1$ processors and
  assigns task $\tau_\ell$ to that processor.

  The first $M$
  processors are categorized
  into two disjoint sets ${\bf M}_1$ and
  ${\bf M}_2$. 
  For any
  $m$ in ${\bf M}_1$, the condition in Eq.~\eqref{eq:demand-bound-test} is
  violated. For any $m$ in ${\bf M}_2$, 
  Eq.~\eqref{eq:demand-bound-test} holds 
  but 
  Eq.~\eqref{eq:utilization-bound-test} is
  violated. Hence,
\begin{align}
    C_\ell + \sum_{\tau_i \in {\bf T}_m}  {\sc dbf}^*(\tau_i, D_\ell) &> D_\ell, & \forall m \in {\bf M}_1 \label{eq:condition1-violated}\\
    u_\ell + \sum_{\tau_i \in {\bf T}_m}  u_i& > 1, & \forall m \in {\bf M}_2.\label{eq:condition2-violated}
  \end{align}
  If $|{\bf M}_1|$ is $0$, by Eq.~\eqref{eq:condition2-violated}, we have
  $\frac{\sum_{\tau_i \in {\bf T}} u_i}{M} > 1-u_\ell  \geq 1-\gamma$, in which
  the asymptotic approximation factor for this case is $\frac{1}{1-\gamma} \leq
  \frac{2}{1-\gamma}$.

  \emph{For the rest of the proof, we focus on the case that $|{\bf M}_1| > 0$}.
  Suppose that $|{\bf M}_2|$ is $x |{\bf M}_1|$, with $x \geq 0$ and $|{\bf
  M}_1|> 0$. That is, $|{\bf M}_1|=\frac{M}{1+x}$ and $|{\bf
  M}_2|=\frac{Mx}{1+x}$.
  To prove the approximation factor, we will build the lower bound of
  $\sum_{\tau_i \in {\bf T}} \frac{C_i}{T_i}$ and $\max_{t>0}\sum_{\tau_i \in
  {\bf T}} \frac{dbf(\tau_i, t)}{t}$.
  For notational brevity, we define the 
  two parameters $\beta$ and $k$: 
\begin{align}   
\beta &\overset{\mbox{def}}{\equiv}\; \frac{|{\bf M}_1|C_\ell}{\sum_{m \in {\bf
M}_1}\sum_{\tau_i \in {\bf T}_m} dbf(\tau_i,
D_{\ell})},\label{eq:def:multi-beta}\\
 k &\overset{{\mbox{def}}}{\equiv}\; \frac{ |{\bf M}_1|D_{\ell}-(1+\beta)
 \sum_{m \in {\bf M}_1}\sum_{\tau_i \in {\bf T}_m} dbf(\tau_i, D_{\ell})}{\sum_{m \in {\bf M}_1}\sum_{\tau_i \in {\bf
T}_m}dbf(\tau_i, D_{\ell})}.  \label{eq:def:multi-k2}
\end{align}
By definition, $\beta > 0$ and we also have
\begin{equation} 
 \frac{\sum_{m \in {\bf M}_1}\sum_{\tau_i \in {\bf T}_m}  {\sc dbf}(\tau_i, D_\ell)}{ |{\bf M}_1|D_\ell} = \frac{1}{1+k+\beta}.  \label{eq:dbf-bound-proof-1st}  
\end{equation}
Moreover, since ${\bf M}_1$ is not empty and $D_\ell > 0$, we also  have
$1+k+\beta > 0$. By (\ref{eq:dbf-bound-proof-1st}), we know that 
\begin{align}
  &\max_{t>0}\sum_{\tau_i \in {\bf T}} \frac{dbf(\tau_i, t)}{t} \geq  
  \sum_{\tau_i \in {\bf T}} \frac{dbf(\tau_i, D_\ell)}{D_\ell}\nonumber\\
  \geq &|{\bf M}_1|\frac{\sum_{m \in {\bf M}_1}\sum_{\tau_i \in {\bf
  T}_m}dbf(\tau_i, D_\ell)}{|{\bf M}_1|D_\ell} = \frac{M}{(1+x) (1+k+\beta)}.\label{eq:final-dbf-proof}
\end{align}

Based on
(\ref{eq:condition1-violated}), we know that
\begin{align}
  &  |{\bf M}_1|C_\ell + \sum_{m \in {\bf M}_1}\sum_{\tau_i \in {\bf T}_m}  {\sc dbf}^*(\tau_i, D_\ell) > |{\bf M}_1|D_\ell\nonumber\\
  \underset{D_i \leq D_\ell}{\Rightarrow} &  |{\bf M}_1|C_\ell + \sum_{m \in {\bf M}_1}\sum_{\tau_i \in {\bf T}_m}  {\sc dbf}(\tau_i, D_\ell) + \frac{(D_\ell-D_i)}{T_i}C_i> |{\bf M}_1|D_\ell\nonumber\\
  \underset{D_i \geq 0}{\Rightarrow} & 
  \sum_{m \in {\bf M}_1}\sum_{\tau_i \in {\bf T}_m}
\frac{D_\ell}{T_i}C_i > |{\bf M}_1|D_\ell - |{\bf M}_1|C_\ell   - \sum_{m \in
{\bf M}_1}\sum_{\tau_i \in {\bf T}_m}  {\sc dbf}(\tau_i, D_\ell)  \nonumber\\
 \underset{ (\ref{eq:def:multi-beta}),
(\ref{eq:def:multi-k2})}{\Rightarrow} & \sum_{m \in {\bf M}_1}\sum_{\tau_i \in {\bf T}_m}
\frac{D_\ell}{T_i}C_i > k \sum_{m \in {\bf M}_1}\sum_{\tau_i \in
  {\bf T}_m} {\sc dbf}(\tau_i, D_\ell)\nonumber\\
  \Rightarrow \;&   
\sum_{m \in {\bf M}_1}\sum_{\tau_i \in {\bf T}_m}
u_i > k \frac{\sum_{m \in {\bf M}_1}\sum_{\tau_i \in
  {\bf T}_m} {\sc dbf}(\tau_i, D_\ell)}{D_\ell}
  \underset{\eqref{eq:dbf-bound-proof-1st}}{=}\frac{M
  k}{(1+x)(1+k+\beta)}\label{eq:utilization-bound-proof-2nd}
\end{align}

By Eq.~\eqref{eq:condition2-violated},
Eq.~\eqref{eq:utilization-bound-proof-2nd}, and 
with
$\gamma \geq \frac{C_\ell}{\min\{T_\ell, D_\ell\}} \geq u_\ell$, we get 
\begin{align}\scriptsize
  \sum_{\tau_i \in {\bf T}} \frac{C_i}{T_i} \geq & \sum_{m \in {\bf M}_1}\sum_{\tau_i \in {\bf T}_m}
u_i + \sum_{m \in {\bf M}_2}\sum_{\tau_i \in {\bf T}_m}
u_i \nonumber\\
> &~(1-u_\ell)|{\bf M}_2| + \frac{M k}{(1+x)(1+k+\beta)}\nonumber\\
  \geq&~ M\left( \frac{x}{1+x} (1-\gamma) +
  \frac{k}{(1+x)(1+k+\beta)}\right)\label{eq:final-utilization-proof}
\end{align}

Any feasible solution to pack the tasks in ${\bf T}$ needs at least
$\sum_{\tau_i \in {\bf T}} \frac{C_i}{T_i}$ or $\max_{t>0}\sum_{\tau_i
  \in {\bf T}} \frac{dbf(\tau_i, t)}{t}$ processors. Thus, for
the lower bound of the number of required processors, by
Eq.~\eqref{eq:final-dbf-proof} and Eq.~\eqref{eq:final-utilization-proof},
\begin{align*}
    & \max\left\{\max_{t>0}\frac{\sum_{\tau_i \in {\bf T}} dbf(\tau_i, t)}{ t}, \sum_{\tau_i \in {\bf T}} \frac{C_i}{T_i} \right\}
    \\
    \underset{(\ref{eq:final-dbf-proof}),
      (\ref{eq:final-utilization-proof})}{\geq} \;\;\;& \frac{M}{1+x}\times\max
    \left\{\frac{1}{1+k+\beta}, x(1-\gamma)+\frac{k}{1+k+\beta}
    \right\}\\
    \geq_1 \qquad & M\times \frac{1-\gamma}{2+\beta-\gamma(1+k+\beta)}
       \geq_2 M\times\frac{1-\gamma}{2},
    \end{align*}
  where $\geq_1$ is because: 1) 
  $\frac{1}{1+k+\beta}$ is a constant with respect to $x$ and 
  $x(1-\gamma)+\frac{k}{1+k+\beta}$ is an increasing function with
  respect to $x$, \mbox{2) their} only intersection 
  happens when $x=\frac{1-k}{(1-\gamma)(1+k+\beta)}$, and 3) hence
  $\left\{\frac{M}{1+x}\times\max \left\{\frac{1}{1+k+\beta},
      x(1-\gamma)+\frac{k}{1+k+\beta} \right \} \right\} \geq
  \frac{M}{1+\frac{1-k}{(1-\gamma)(1+k+\beta)}}\frac{1}{1+k+\beta} = 
  M\times \frac{1-\gamma}{(1-\gamma)(1+k+\beta)+1-k} = M\times \frac{1-\gamma}{2+\beta-\gamma(1+k+\beta)}$.
The inequality $\geq_2$ is from the
  fact that $\frac{\beta}{1+k+\beta}$ is equal to $C_\ell/D_\ell$ by
  definition and is no more than $\max_{\tau_i \in {\bf
      T}}\frac{C_i}{\min\{T_i, D_i\}}$, defined as $\gamma$, i.e.,
  $\frac{\beta}{1+k+\beta} \leq \gamma$, which implies that
  $\beta-\gamma(1+k+\beta) \leq 0$.

    Hence, there must be at least $\frac{1-\gamma}{2}M$ processors
    in any feasible solution.
    Thus, the DM partitioning is an asymptotic
    $\frac{2}{1-\gamma}$-approximation algorithm for the multiprocessor
    partitioned packing problem.
\end{proofAppendix}

\subsection*{Proofs related to Section~\ref{sec:hardness}}

\begin{proofAppendix}{Theorem~\ref{theorem:gap-speedup-vs-processors} of the
property $\sum_{i=1}^{N} \frac{dbf(\tau_i, t)}{1+\epsilon} \leq t, \forall t
  > 0$}
Since the $N$ constructed tasks in the proof of
Theorem~\ref{theorem:gap-speedup-vs-processors} have the same period, for the
simplicity of presentation, let $T$ be
$\frac{(1+\epsilon)^{N-2}}{\epsilon^{N-1}}$. We can divide the time
interval $[0, \infty]$ into $[0, D_1), [D_1, D_2), \ldots, [D_{N-1},
D_N=T), [T+0, T+D_1), [T+D_1, T+D_2], \ldots$. Suppose that $\ell$ is
a non-negative integer and $j$ is an index $j \in \setof{1,2,3,\ldots,
  N}$, where $t$ is in interval $[\ell T + D_{j-1}, \ell T +
D_j)$. Here, $D_0$ is an auxiliary parameter set to $0$ for brevity.

Then, due to the
parameters of task $\tau_i$ and $t\in [\ell T + D_{j-1}, \ell T +
D_j)$, we have $dbf(\tau_i, t) = (\ell+1) C_i \mbox{ if } i < j$ and
$dbf(\tau_i, t) = \ell C_i \mbox{ if } j \leq i \leq N$. As a result,
when $j \in \setof{3,4,\ldots,N}$ and $t \in [\ell T + D_{j-1}, \ell T
+ D_j)$, we have
\begin{align*}
\;\;&  \sum_{i=1}^{N} \frac{dbf(\tau_i, t)}{1+\epsilon} = \left(\ell
  \sum_{i=1}^{N} \frac{C_i}{1+\epsilon} \right)+ 
  \sum_{i=1}^{j-1} \frac{C_i}{1+\epsilon} = \frac{\ell}{1+\epsilon} \left(1+\sum_{i=2}^{N} C_i\right) +\frac{1}{1+\epsilon} \left(1+\sum_{i=2}^{j-1} C_i\right) \\
 = \;\;&  \frac{\ell}{1+\epsilon} \left(1+\sum_{i=2}^{N} \frac{1}{\epsilon}\left(\frac{1+\epsilon}{\epsilon}\right)^{i-2} \right) + \frac{1}{1+\epsilon} \left(1+\sum_{i=2}^{j-1} \frac{1}{\epsilon}\left(\frac{1+\epsilon}{\epsilon}\right)^{i-2} \right) \\
 =_1\; &  \frac{\ell}{1+\epsilon} \left(1+\frac{1}{\epsilon}\left(\frac{(\frac{1+\epsilon}{\epsilon})^{N-2+1}-1}{\frac{1+\epsilon}{\epsilon}-1}\right)\right) 
+ \frac{1}{1+\epsilon} \left(1+\frac{1}{\epsilon}\left(\frac{(\frac{1+\epsilon}{\epsilon})^{j-1-2+1}-1}{\frac{1+\epsilon}{\epsilon}-1}\right)\right) \\
 = \;\;&  \frac{\ell}{1+\epsilon} \left(\frac{(1+\epsilon)^{N-1}}{\epsilon^{N-1}} \right) + \frac{1}{1+\epsilon} \left( \frac{(1+\epsilon)^{j-2}}{\epsilon^{j-2}}\right)\\
 = \;\; &\ell \left(\frac{(1+\epsilon)^{N-2}}{\epsilon^{N-1}} \right) + \left( \frac{(1+\epsilon)^{j-1-2}}{\epsilon^{j-1-1}}\right)\\
= \;\;& \ell T + D_{j-1} \leq t
\end{align*}
where $=_1$ is due to the  geometric sequence $C_2, C_3, \ldots, C_N$.

Similarly, when $j$ is $1$, we have $\sum_{i=1}^{N} \frac{dbf(\tau_i,
  t)}{1+\epsilon} = \ell T \leq t$ and when $j$ is $2$ we have
$\sum_{i=1}^{N} \frac{dbf(\tau_i, t)}{1+\epsilon} = \ell T +1\leq
t$. Therefore, we reach the conclusion that $\sum_{i=1}^{N}
\frac{dbf(\tau_i, t)}{1+\epsilon} \leq t, \forall t > 0$.
\end{proofAppendix}

\subsection*{Comments on the Error
  in~\cite{DBLP:journals/ipl/Woeginger97} regarding non-existence of
  an APTAS for the two-dimensional vector packing problem}

We also find that the proof for the non-existence of an APTAS
  for the two-dimensional vector packing problem is erroneous 
  in~\cite{DBLP:journals/ipl/Woeginger97}. By using the same terminologies in
  \cite{DBLP:journals/ipl/Woeginger97}, here we explain this below. The error
  comes from a mistake in Observation 4 in~\cite{DBLP:journals/ipl/Woeginger97}
  for the feasibility to pack any arbitrary three vectors into a unit-bin. The
  correct observation is that 3 vectors can only be arbitrarily packed into a
  unit-bin if at most two are from $T$. By putting $3$ vectors generated from
  $T$ in one unit-bin, the sum in the first dimension will exceed $1$.
  Therefore, for the only-if part in the proof of Lemma 5 in
  \cite{DBLP:journals/ipl/Woeginger97}, it may require more than
  $\frac{3q+|T|-4\alpha}{3}$ unit-bins for packing the remaining
  $3q+|T|-4\alpha$ vectors in $U$. As a result, the vectors should be created
  more carefully as we will show in Section~\ref{sec:2d-dvp-hard}. By scaling
  the first dimension by a factor $\frac{4}{3}$ in
   vectors in ${\bf V}$ and excluding the dummy vectors corresponding ${\bf W}$
   from ${\bf V}$ in Section~\ref{sec:2d-dvp-hard}, it can be shown that
  the statement in Lemma 5 in~\cite{DBLP:journals/ipl/Woeginger97} can hold, and
  the hardness property for the two-dimensional vector packing problem can be
  proved.

\subsection*{Proofs related to Section~\ref{sec:2d-dvp-hard}}

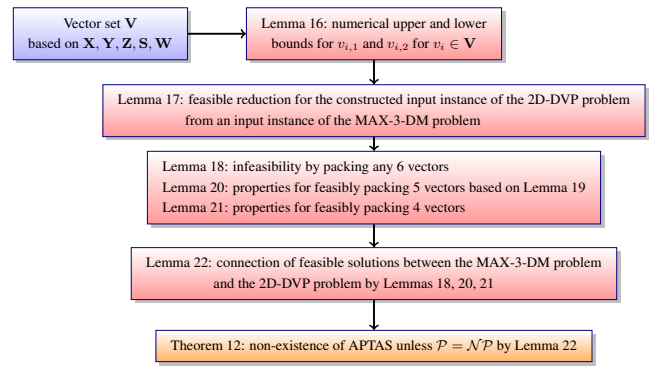
\begin{figure}[t]
  \centering
  \scalebox{0.60}{
  \begin{tikzpicture}[node distance = 1.5cm, auto]
    \node[end] (input) {
      \begin{tabular}{c}
        Vector set ${\bf V}$ \\based on ${\bf X, Y, Z, S, W}$\\
      \end{tabular}
    };
      \node[entity,right of=input,xshift=4.5cm] (get-values) {
      \begin{tabular}{c}
        Lemma~\ref{lemma:vector-values}: numerical upper and lower\\ bounds for 
        $v_{i,1}$ and $v_{i,2}$ for $v_i \in {\bf V}$
      \end{tabular}
    };
    \node[entity,below of=get-values,yshift=-0.2cm] (feasible-2D-DVP) {
      \begin{tabular}{l}
        Lemma~\ref{lemma:feasible-2D-DVP}: feasible reduction for the
        constructed input instance  of the 2D-DVP problem\\
        \hspace{1.4cm} from an input instance of the MAX-3-DM problem\\
      \end{tabular}
    };
    \node[entity,below of=feasible-2D-DVP,yshift=-0.2cm] (packing-properties) {
      \begin{tabular}{l}
        Lemma~\ref{lemma:6vectors}: infeasibility by packing any 6 vectors\\
        Lemma~\ref{lemma:5vectors}: properties for feasibly packing 5 vectors based on Lemma~\ref{lemma:5vectors-infeasibility}\\
        Lemma~\ref{lemma:4vectors}: properties for feasibly packing 4 vectors
      \end{tabular}
    };
    \node[entity,below of=packing-properties,yshift=-0.4cm] (max3dm-link) {
      \begin{tabular}{l}
        Lemma~\ref{lemma:connection}: connection of feasible solutions between the
        MAX-3-DM problem\\\hspace{1.4cm} and the 2D-DVP problem by Lemmas \ref{lemma:6vectors},~\ref{lemma:5vectors},~\ref{lemma:4vectors}
      \end{tabular}
    };
    \node[max,below of=max3dm-link,yshift=-0.1cm] (final) {
      \begin{tabular}{c}
        Theorem~\ref{thm:apx-hard-2-d}: non-existence of APTAS unless ${\cal P}={\cal NP}$ by Lemma~\ref{lemma:connection}
      \end{tabular}
    };
    \draw[very thick,->] (input) -- (get-values);
    \draw[very thick,->]  (get-values) -- (feasible-2D-DVP);
    \draw[very thick,->]  (feasible-2D-DVP) -- (packing-properties);
    \draw[very thick,->]  (packing-properties) -- (max3dm-link);
    \draw[very thick,->]  (max3dm-link) -- (final);
  \end{tikzpicture}}
\caption{The proof strategy for the non-existence of APTAS for the 2D-DVP problem.}
  \label{fig:detail-lemma-max-steps}
\end{figure}

Our proof strategy is shown in
Figure~\ref{fig:detail-lemma-max-steps}.
The first step of our L-reduction follows a similar strategy in
\cite{DBLP:journals/ipl/Woeginger97} by constructing an L-reduction
from the Maximum Bounded 3-Dimensional Matching (MAX-3-DM), which is
MAX SNP-complete \cite{Kann:1991:MBM:105391.105396}. 
The MAX-3-DM problem is defined as follows: We are given three sets
${\bf X} = \setof{x_1, \ldots, x_q}$, ${\bf Y} = \setof{y_1, \ldots,
  y_q}$, ${\bf Z} = \setof{z_1, \ldots, z_q}$ and a subset ${\bf S}
\subseteq {\bf X}\times{\bf Y}\times{\bf Z}$ so that each element in
${\bf X, Y, Z}$ occurs in one, two, or three triples in ${\bf S}$, i.e.,
$q \leq |{\bf S}| \leq 3q$. The goal is to find a maximum cardinality
subset ${\bf S}'$ of ${\bf S}$ such that no two triples in ${\bf S}'$
agree in any coordinate.

We denote the input instance for the MAX-3-DM problem by $I$ and the
optimal solution is with cardinality $\mbox{OPT}(I)$. In our proof, we
will use Observation 1 and Observation 2 from Woeginger in
\cite{DBLP:journals/ipl/Woeginger97}, restated here in
Lemma~\ref{lemma:opt-max3dm} and Lemma~\ref{lemma:4values}.


\begin{lemma}[Observation 1 from Woeginger
in~\cite{DBLP:journals/ipl/Woeginger97}]
  \label{lemma:opt-max3dm}
  The cardinality $\mbox{OPT}(I)$ of an optimal solution for any input instance $I$ of the
  MAX-3-DM problem is at least $\frac{q}{7}$.
\end{lemma}
%


For an input instance of the MAX-3-DM problem, let
\begin{align*}
&  x_i' = ir+1, & 1 \leq i \leq q,\\
&  y_i' = ir^2+2, & 1 \leq i \leq q,\\
&  z_i' = ir^3+4, & 1 \leq i \leq q,
\end{align*}
where $r=32q$.
For a triple $(x_i, y_j, z_k)$ in ${\bf S}$, we define
\[
s_\ell' = r^4-kr^3-jr^2-ir+8.
\]
Let ${\bf Q}$ be the set of the above $3q+|{\bf S}|$ integers, $x_i', y_i',
z_i', s_\ell'$. Moreover, let $b=r^4+15$. Resulting from this, we get:

\begin{lemma}[Observation 2 from Woeginger
in~\cite{DBLP:journals/ipl/Woeginger97}]
  \label{lemma:4values}
  Four integers in ${\bf Q}$ sum up to the value $b$ if and only if
  (1) one of them corresponds to some element $x_i \in {\bf X}$, one of them
  corresponds to some element $y_j \in {\bf Y}$, one of them corresponds to
  some element $z_k \in {\bf Z}$, and one of them corresponds to some triple
  $s_\ell \in {\bf S}$, and if (2) $s_\ell = (x_i, y_j, z_k)$ holds for
  these four elements.
\end{lemma}
\begin{proof}
  This property is the same as the Observation 2 in
  \cite{DBLP:journals/ipl/Woeginger97}. We provide a more
  comprehensive proof here.

  The if part is due to the definition of $s_\ell'$. The only-if
  part comes from the working modulo $r$, modulo $r^2$, modulo $r^3$,
  and modulo $r^4$, a slightly updated and changed version of an argument from \cite[page 98]{b:Gary79}, detailed as follows:
  \begin{compactitem}
  \item We denote the four integers in ${\bf Q}$ sum up to $b=r^4+15$ as $q_1, q_2, q_3, q_4$, i.e., $\sum_{h=1}^{4}q_h = r^4+15$. 
  \item By the definition of $x_i', y_i', z_i', s_\ell'$, we know that
    $(q_h \mbox{ modulo } r)$ is either $1,2,4,$ or $8$ for
    $h=1,2,3,4$. Moreover, $15 = (b \mbox{ modulo } r) = (\sum_{h=1}^4
    q_h) \mbox{ modulo } r= \sum_{h=1}^4 (q_h \mbox{ modulo } r)$,
    where the last equality is due to the facts that $r =32q \geq 32$
    and $(q_h \mbox{ modulo } r) \leq 8$. We can now enumerate all the
    combinations of the 4 values $q_1, q_2, q_3, q_4$. The only
    possibility to achieve $15=\sum_{h=1}^4 (q_h \mbox{ modulo } r)$
    is that each of the four integers $q_1, q_2, q_3,$ and $q_4$
    \emph{exactly} corresponds to one element in ${\bf X}, {\bf Y},
    {\bf Z}$, and ${\bf S}$, respectively. This proves the first part
    of the lemma.
  \item Therefore, without loss of generality, we consider that $q_1$
    is $x_i'$, $q_2$ is $y_j'$, $q_3$ is $z_k'$, and $q_4$ is
    $s_\ell'$ defined by a triple $(x_{i^*}, y_{j^*}, z_{k^*})$ in
    ${\bf S}$.  To prove the second part of the lemma, we need to show
    that $i^*$ equals to $i$, $j^*$ equals to $j$, and $k^*$ equals to
    $k$.
  \item We consider the modulo $r^2$. We have $(b \mbox{ modulo }
    r^2)=15$, $(q_1 \mbox{ modulo } r^2) = x_i'$, $(q_2 \mbox{ modulo
    } r^2) = 2$, $(q_3 \mbox{ modulo } r^2) = 4$, and $(q_4 \mbox{
      modulo } r^2) = r^2 - i^*r + 8$.\footnote{$(q_4 \mbox{
      modulo } r^2) = r^2 - i^*r + 8$ is due to the fact $-i^*r+8 < 0$
    since $i^* > 0$ and $r = 32q \geq 1$.} Therefore, $((x_i'+2+4+r^2 - i^*r+8)
    \mbox{ modulo } r^2) = 15$, which implies $((r^2 + ir - i^*r)
    \mbox{ modulo } r^2) = 0$.  Since $1 \leq i \leq q$, $1 \leq i^* \leq q$, and $r =
    32q$, we know that $i^* \neq i$ results in $((r^2 + ir - i^*r)
    \mbox{ modulo } r^2)  \neq 0$. Therefore, $i^*$ must be equal to $i$.
\item The modulo $r^3$ with the same step above ensures that $j^*$ is equal to $j$. 
\item The modulo $r^4$ with the same step above ensures that $k^*$ is equal to $k$. 
  \end{compactitem}
  We therefore reach the conclusion of the lemma.
\end{proof}


The integers in ${\bf Q}$ are defined as the same as in
\cite{DBLP:journals/ipl/Woeginger97}. \emph{However, the constructed
  (reduced) two-dimensional vectors have to be carefully designed to
  be a feasible input instance for the 2D-DVP problem, whereas the
  hardness remains.} 
  Therefore, the rest of the proof is
different from \cite{DBLP:journals/ipl/Woeginger97}.
 For illustrating the proof
strategy, Fig.~\ref{fig:detail-lemma-max-steps}
provides a short summary. The reduced input instance for the
2D-DVP problem is to first create $3q+|{\bf S}|$ two-dimensional
vectors as follows:
\begin{subequations}
\begin{align}
v_i &= (0.18+\frac{3x_i'}{4\cdot 5b}, 0.26-\frac{x_i'}{5b}), & 1 \leq i \leq q,\\
v_{i+q} &= (0.18+\frac{3y_i'}{4\cdot 5b}, 0.26-\frac{y_i'}{5b}), & 1 \leq i \leq q,\\
v_{i+2q} &= (0.18+\frac{3z_i'}{4\cdot 5b}, 0.26-\frac{z_i'}{5b}), & 1 \leq i \leq q,\\
v_{i+3q} &= (0.06+\frac{3s_i'}{4\cdot 5b}, 0.42-\frac{s_i'}{5b}), & 1 \leq i \leq |{\bf S}|.
\end{align}
\label{eq:constructions-of-v}
\end{subequations}

For the simplicity of presentation, we say that a vector $v_i$
corresponds to set ${\bf X}, {\bf Y}, {\bf Z}$, or ${\bf S}$ if the
vector is constructed according to an element in the corresponding
set. Moreover, we create additional $|{\bf S}|$ vectors that are
invariant. 
Accordingly, we say that these
vectors are corresponding to a dummy vector set ${\bf W}$, where $|{\bf
  W}| = |{\bf S}|$:
\begin{align*}
v_{i+3q+|S|} &= (0.25, 0), & 1 \leq i \leq |{\bf W}|.
\end{align*}
We use ${\bf V}$ to denote the set of
the reduced vectors that are constructed above. The following lemma
shows that the above construction makes the vectors corresponding to
${\bf X}, {\bf Y}$, and ${\bf Z}$ almost similar to each other and
different from the vectors corresponding to ${\bf S}$.

\begin{lemma}
  \label{lemma:vector-values}
  For a vector $v_i$ in ${\bf V}$,
  \begin{enumerate}
  \item $0.18 < v_{i,1} < 0.185$ and $0.25374 < v_{i,2} < 0.26$ if
    $v_i$ corresponds to ${\bf X}, {\bf Y}$, or ${\bf Z}$;
  \item $0.205 < v_{i,1} < 0.21$ and $0.22 < v_{i,2} < 0.2265$ if
    $v_i$ corresponds to ${\bf S}$.
  \end{enumerate}
\end{lemma}
\begin{proof}
By definitions with $q \geq 1$, we know that
\begin{align}
0<& \frac{x'_i}{5b} < \frac{q\cdot 32q+1}{5\times(32q)^4} < 0.0000063, \\
0<& \frac{y'_i}{5b} < \frac{q\cdot (32q)^2+2}{5\times(32q)^4} <  0.0002, \\
0<& \frac{z'_i}{5b} < \frac{q\cdot (32q)^3+4}{5\times(32q)^4} < 0.00626.
\end{align}
Moreover, we have
\begin{align}
  0.2 > \frac{s'_\ell}{5b} &>\frac{(32q)^4-q\cdot (32q)^3-q\cdot (32q)^2-q\cdot 32q+8}{5\times((32q)^4+15)} >0.1935.
\end{align}

Therefore, by taking the above inequalities and the definition of
vectors in ${\bf V}$, the statement in the lemma 
is simple arithmetic.
\end{proof}

\begin{lemma}
  \label{lemma:feasible-2D-DVP}
  The constructed input instance above from an input instance of the MAX-3-DM problem is a feasible input of the 2D-DVP problem
\end{lemma}
\begin{proof}
  By construction $v_{i,1} > 0$ for any constructed $v_i$.  Based on
  Lemma~\ref{lemma:vector-values}, we know that $v_{i,2} > v_{i,1}$ holds for a
  vector $v_i$ corresponding to set ${\bf X}, {\bf Y}, {\bf Z}$, or ${\bf S}$, 
  whereas for a vector $v_i$ corresponding to set ${\bf W}$ we know that
  $v_{i,2}=0$.   Moreover, 
  since $x_i', y_i', z_i, s_i', b$ are positive integers, 
  we also know that $v_{i,1}$ and $v_{i,2}$ are rational
  numbers by our constructions.  Hence, ${\bf V}$ is a
  feasible input instance for the 2D-DVP problem.
\end{proof}

Now, we can show the hardness due to the vector set
${\bf V}$.  
The following lemmas (Lemma~\ref{lemma:6vectors} to Lemma~\ref{lemma:4vectors})
are based on numerical properties 
of the construction of 
${\bf V}$, considering 
to pack $6$
vectors, $5$ vectors, and $4$ vectors into a bin.
\begin{lemma}
  \label{lemma:6vectors}
  Any six vectors in ${\bf V}$ cannot be feasibly packed into a
  bin.
\end{lemma}
\begin{proof} The first dimension is at least $0.18$ for each vector. 
\end{proof}

\begin{lemma}
  \label{lemma:5vectors-infeasibility}
  If five vectors in ${\bf V}$ can be feasibly packed into a bin, then the
  following three properties hold:
  \begin{enumerate}
  \item at most one vector corresponds to ${\bf W}$,
  \item at most three vectors correspond to ${\bf X}$, ${\bf Y}$, or ${\bf Z}$, and
  \item at most one vector corresponds to ${\bf S}$.
  \end{enumerate}
\end{lemma}
\begin{proof}

This follows from the numerical properties in Lemma~\ref{lemma:vector-values}.

  {\bf Property 1:} 
  Suppose 
  two
  vectors are from ${\bf W}$.  
  The other three vectors
  cannot exceed $0.5$ in the first dimension. However, as any vector
  $v_i$ corresponding to ${\bf X}$, ${\bf Y}$, ${\bf Z}$, or ${\bf S}$
  has $v_{i,1} > 0.18$, we reach a contradiction. For 3, 4, and 5 vectors in
  ${\bf W}$, the proof is identical. 

  {\bf Property 2:} 
  As the second dimension for
  any vector corresponding to ${\bf X}$, ${\bf Y}$, or ${\bf Z}$ is
  larger than $0.25$, if there are more than three vectors
  of these vectors, the second
  dimension will be more than $1$.

  {\bf Property 3:} 
  Suppose that there are $\ell \in \setof{2,3,4,5}$
  vectors from ${\bf S}$ for contradiction. 
  By property 1, we only have to
  consider whether a vector from ${\bf W}$ is one of the five vectors
  or not. Therefore, there are two sub-cases. (Sub-case 1:) If all the
  other $5-\ell$ vectors are corresponding to ${\bf X}$, ${\bf Y}$, or
  ${\bf Z}$, (i.e., none of them corresponds to ${\bf W}$), the sum in
  the second dimension is more than $0.25374\times
  (5-\ell)+0.22\times\ell>1.10$. (Sub-case 2:) If one vector
  corresponds to ${\bf W}$, and other $4-\ell$ vectors correspond to ${\bf
    X}$, ${\bf Y}$, or ${\bf Z}$, then the sum in the first dimension
  is more than $0.18\times (4-\ell)+0.205\times\ell+0.25\geq 1.02$ for $\ell=2,3,4$.
  Therefore, the third property holds.
\end{proof}

\begin{lemma}
  \label{lemma:5vectors}
  Five vectors in ${\bf V}$ can be feasibly packed into a bin if and only
  if (1) one of them corresponds to some element $x_i \in {\bf X}$, one of
  them corresponds to some element $y_j \in {\bf Y}$, one of them
  corresponds to some element $z_k \in {\bf Z}$, one of them corresponds to
  some element in ${\bf W}$, and one of them corresponds to some
  triple $s_\ell \in {\bf S}$, and (2) $s_\ell = (x_i, y_j, z_k)$ holds
  for 
  the elements from ${\bf X}, {\bf Y}, {\bf Z}, {\bf S}$.
\end{lemma}
\begin{proof}
  The if-part is based on the definition. We focus on the only-if
  part.  
   Based on
  Lemma~\ref{lemma:5vectors-infeasibility}, to feasibly pack $5$
  vectors, denoted 
  here as ${\bf V}'$,  
  three of them correspond to
  ${\bf X}, {\bf Y}, {\bf Z}$, one corresponds to ${\bf S}$, and one
  corresponds to ${\bf W}$.
  We know that 
  \mbox{$\sum_{v_i \in {\bf V}'} v_{i,1} \leq 1$} and
  $\sum_{v_i \in {\bf V}'} v_{i,2} \leq 1$.
  Let $\sigma$ be the sum of the \emph{four} integers in
  ${\bf Q}$ that are used to construct 
  the vectors from ${\bf X}, {\bf Y}, {\bf Z}, {\bf S}$ in ${\bf V}'$.

  As a result, we know that $\sum_{v_i \in {\bf V}'} v_{i,1} = 0.85 +
  \frac{3 \sigma}{4\cdot 5b}\leq 1$, which implies $\sigma \leq
  b$. Similarly, we have $\sum_{v_i \in {\bf V}'} v_{i,2} = 1.2 -
  \frac{\sigma}{5b}\leq 1$, which implies $\sigma \geq b$. Hence,
  $\sigma=b$ must hold. Therefore, the observation in Lemma~\ref{lemma:4values}
  yields the only-if part.
\end{proof}

\begin{lemma}
  \label{lemma:4vectors}
  Four vectors in ${\bf V}$ can be feasibly packed into a bin if 1) exactly three
  of them correspond to elements in ${\bf X}\cup {\bf Y}\cup{\bf Z}$
  and one of them corresponds to an element in ${\bf W}$, or 2) four
  of them correspond to elements in ${\bf S}\cup {\bf W}$.
\end{lemma}
\begin{proof}
  These properties are based on the numerical inequalities in
  Lemma~\ref{lemma:vector-values}. Let the ${\bf V}'$ be the set of
  the four vectors. For the first case, we have $\sum_{v_i \in {\bf
      V}'} v_{i,1} <0.185\times 3 + 0.25 = 0.805\leq 1$ and $\sum_{v_i
    \in {\bf V}'} v_{i,2} <0.26\times 3 + 0 =0.78\leq 1$.  For the
  second case, we have $\sum_{v_i \in {\bf V}'} v_{i,1} \leq 0.21
  \times (4-\ell) + 0.25\times \ell \leq 1$ and $\sum_{v_i \in {\bf
      V}'} v_{i,2} \leq 0.2265\times(4-\ell) + 0 \leq 1$, where $\ell$
  is the number of vectors in ${\bf V}'$ that corresponds to ${\bf
    W}$ and $0 \leq \ell \leq 4$.
\end{proof}

Based on the above lemmas, the following lemma provides a connection
between the feasible solutions of the MAX-3-DM problem and the
multiprocessor partitioned packing problem.

\begin{lemma}
  \label{lemma:connection}
  Let $\eta >0$ be an integer such that $\frac{3q+2|{\bf S}|-\eta}{4}$
  is an integer. There exists a feasible solution for the input
  instance $I$ of the MAX-3-DM problem that contains at least $\eta$
  triples if and only if there exists a feasible packing for reduced
  input instance ${\bf V}$ of the 2D-DVP problem
  that uses at most $\frac{3q+2|{\bf S}|-\eta}{4}$
  bins.
\end{lemma}
\begin{proof}
  {\bf only-if}: Let ${\bf S}'$ with $|{\bf S}'|=\eta$ be the feasible
  solution of the MAX-3-DM problem. Based on
  Lemma~\ref{lemma:5vectors}, we know that we can feasibly pack
  $5\eta$ vectors among the $3q+2|{\bf S}|$ vectors in ${\bf V}$ by
  using $\eta$ bins, in which $\eta$ vectors corresponding to ${\bf
    W}$, 
    $\eta$ vectors corresponding to ${\bf S}$,
  and $3\eta$ vectors corresponding to ${\bf X}$, ${\bf Y}$,
  and ${\bf Z}$ are chosen. Note that by definition $\eta$ is at most
  $q$.  For the remaining $3(q-\eta)$ vectors corresponding to ${\bf
    X}, {\bf Y}$, or ${\bf Z}$, we can group three of them by using
  $(q-\eta)$ bins. For each of these $(q-\eta)$ bins, based on
  Lemma~\ref{lemma:4vectors} and the fact that $|{\bf W}| = |{\bf S}| \geq q$, we
  can additionally assign one remaining vector corresponding to ${\bf
    W}$ such that these four vectors (one corresponding to ${\bf W}$,
  and three corresponding to ${\bf X}, {\bf Y}$, or ${\bf Z}$) can be
  feasibly packed in one bin. Again, based on
  Lemma~\ref{lemma:4vectors}, for the remaining $(|{\bf S}|-\eta)$
  vectors corresponding to ${\bf S}$ and $(|{\bf W}|-q = |{\bf S}|-q)$ vectors
  corresponding to ${\bf W}$, we can feasibly pack any four of them in
  a bin. Therefore, the above packing is feasible and requires exactly
  \[
  \frac{3q+2|{\bf S}|-5\eta}{4} + \eta =    \frac{3q+2|{\bf S}|-\eta}{4}
  \]
  bins, which is valid since $\frac{3q+2|{\bf S}|-\eta}{4}$ is assumed to be an integer.

  {\bf if}: Consider a feasible packing by using at most
  $\frac{3q+2|{\bf S}|-\eta}{4}$ bins. As there is no feasible packing
  for any $6$ vectors, a bin must have at most $5$ vectors. Suppose
  that exactly $\eta'$ bins are with $5$ vectors. If $\eta' < \eta$,
  the feasible packing requires at least $\frac{3q+2|{\bf
      S}|-5\eta'}{4}+\eta' = \frac{3q+2|{\bf S}|-\eta'}{4}$
  bins. Therefore, there must be at least $\eta$ bins with $5$
  vectors.  Then, based on Lemma~\ref{lemma:5vectors}, each of these
  $\eta$ bins is a corresponding triple in ${\bf S}$ for the MAX-3-DM problem, and no two triples agree in
  any coordinate, since each element in ${\bf X}, {\bf Y}, {\bf Z}$
  has only one corresponding vector in ${\bf T}$.
  Therefore, the
  input instance $I$ for the MAX-3-DM problem has a solution with at
  least $\eta$ triples and can be derived in polynomial time if a feasible 
  packing which uses at most $\frac{3q+2|{\bf S}|-\eta}{4}$ bins is given.
\end{proof}


\begin{proofAppendix}{Theorem~\ref{thm:apx-hard-2-d}}
  Consider any input instance $I$ for the MAX-3-DM problem.
  Suppose that there exists an APTAS, called Algorithm ${\cal A}$, for
  the 2D-DVP problem for contradiction.  We will show that this will
  contradict the MAX SNP-completeness of the MAX-3-DM problem \cite{Kann:1991:MBM:105391.105396}.  That is,
  unless ${\cal P} = {\cal NP}$, there does not exist any polynomial
  time approximation algorithm ${\cal A}'$ for the MAX-3-DM problem
  with
  \begin{equation}
    \label{eq:maxsnp}
    {\cal A}'(I) \geq (1-\epsilon)\mbox{OPT}(I),
  \end{equation}
  for an arbitrarily small real $\epsilon > 0$, where
  ${\cal A}'(I)$ is the number of triples in the solution derived from ${\cal A}'$.
    We also define
  $\delta$ as $\frac{\epsilon}{63}$.  Suppose that Algorithm ${\cal
    A}$ has an asymptotic guarantee to provide the following
  asymptotic approximation factor
\begin{equation}
    \label{eq:guarantee-A}
    {\cal A}({\bf V}) \leq (1+\delta)\mbox{OPT}({\bf V}) + \alpha^*,
  \end{equation}
  where ${\cal A}({\bf V})$ is the number of bins derived from
  Algorithm ${\cal A}$ for input instance ${\bf V}$, $\mbox{OPT}({\bf V})$ is the optimal solution
  of the 2D-DVP problem for input
  instance ${\bf V}$, and $\alpha^*$ is a constant. By definition,  ${\cal A}({\bf V})$ is a positive integer.

 If the maximum cardinality of a feasible ${\bf S}'$ is small,
  the input instance $I$ can be solved by checking all possible subsets
  of ${\bf S}$ with constant-bounded cardinalities. That is, if there
  does not exist any feasible solution ${\bf S}'\subseteq{\bf S}$ with
  $|{\bf S}'|=\frac{4}{\delta}(1+\delta+\alpha^*)$, checking all
  possible subsets ${\bf S}'$ of ${\bf S}$ with cardinality up to
  $\frac{4}{\delta}(1+\delta+\alpha^*)$ only takes polynomial time.
  Thus, if $\mbox{OPT}(I) < \frac{4}{\delta}(1+\delta+\alpha^*)$,
   the optimal solution of input instance $I$ can be determined
  in polynomial time.

  Now, we move to the remaining case that
\begin{equation}
    \label{eq:remaining-opt-large}
    \mbox{OPT}(I) \geq
    \frac{4}{\delta}(1+\delta+\alpha^*).
  \end{equation}
  The L-reduction by constructing ${\bf V}$ for the 2D-DVP problem 
  from the input instance $I$ of the MAX-3-DM problem can
  be done in polynomial time.
  Then, based on a result from Algorithm ${\cal A}$ to
  solve the input instance ${\bf V}$,
  we determine an integer $\eta$ with 
  \begin{equation}
    \label{eq:eta}
    \eta = 3q+2|{\bf S}|-4{\cal A}({\bf V}).
  \end{equation}
  Equivalent to (\ref{eq:eta}), we know
  \begin{equation}
    \label{eq:approximate-result}
    {\cal A}({\bf V}) = \frac{3q+2|{\bf S}|-\eta}{4}.
  \end{equation}
  Due to Lemma~\ref{lemma:connection}, we also 
  know that in the
  feasible solution derived from Algorithm ${\cal A}$, there must be
  at least $\eta$ bins with exactly five vectors in ${\bf V}$. We can
  construct a feasible solution ${\bf S}'$ for the input instance $I$ of
  the MAX-3-DM problem with cardinality equal to $\eta$, i.e., $|{\bf S}'|=\eta \geq
  \mbox{OPT}(I)$. According to the proof
  for the only-if part of Lemma~\ref{lemma:connection}, the
  construction of ${\bf S}'$ takes only polynomial time.
  Now, we will prove that such an ${\bf S}'$ 
  has an approximation factor of $1-\epsilon$ for the
  MAX-3-DM problem. By the facts that $\eta
  \geq \mbox{OPT}(I)$ and ${\cal A}({\bf V})=\frac{3q+2|{\bf S}|-\eta}{4}$ is an integer, we have
  \begin{equation}
    \label{eq:opt-upper}
    \mbox{OPT}({\bf V}) \leq {\cal A}({\bf V})  = \frac{3q+2|{\bf S}|-\eta}{4} \leq \ceiling{\frac{3q+2|{\bf S}|-\mbox{OPT}(I)}{4}}.
  \end{equation}
  By (\ref{eq:guarantee-A}), (\ref{eq:approximate-result}), and
  (\ref{eq:opt-upper}), we get
  \begin{align}
    {\cal A}({\bf V})   =\frac{3q+2|{\bf S}|-\eta}{4}  &= (1+\delta)
    \mbox{OPT}({\bf V})+ \alpha^* \nonumber\\&\leq
    (1+\delta)\left(\ceiling{\frac{3q+2|{\bf S}|-\mbox{OPT}(I)}{4}}\right) + \alpha^*\nonumber\\
    & \leq (1+\delta)\left(\frac{3q+2|{\bf S}|-\mbox{OPT}(I)}{4}+1\right) + \alpha^*    \label{eq:upper-A}
  \end{align}
  We re-organizing $\frac{3q+2|{\bf S}|-\eta}{4}\leq
  (1+\delta)\left(\frac{3q+2|{\bf S}|-\mbox{OPT}(I)}{4}+1\right) + \alpha^*$   
  in Eq.~\eqref{eq:upper-A}: 
  \begin{align*}
    (1+\delta) \mbox{OPT}(I) & \leq \eta+2|{\bf
    S}|\delta+3q\delta+4(1+\delta+\alpha^*)   \\ \leq_1 \eta+6q\delta+3q\delta+\delta \mbox{OPT}(I) & \leq_2 \eta+63\delta \mbox{OPT}(I)+\delta \mbox{OPT}(I),
  \end{align*}
  where $\leq_1$ comes from the definition that $|{\bf S}| \leq 3q$
  and Eq.~\eqref{eq:remaining-opt-large}, and $\leq_2$ comes from
  Lemma~\ref{lemma:opt-max3dm} as $\mbox{OPT}(I) \geq \frac{q}{7}$.
  Therefore, due to the setting of $\delta = \frac{\epsilon}{63}$, we
  reach the following inequality
  \begin{equation}
    \label{eq:final-ratio-1}
    \mbox{OPT}(I) \leq \eta+63\delta \mbox{OPT}(I) = \eta +\epsilon \mbox{OPT}(I).
  \end{equation}
  Now, we reach the approximation factor of the feasible solution
  ${\bf S}'$ for the input instance $I$ of the MAX-3-DM problem, in
  which
  \begin{equation}
    \label{eq:final-ratio-2}
    (1-\epsilon)\mbox{OPT}(I) \leq \eta = |{\bf S}'|.
  \end{equation}
  Hence, the MAX-3-DM problem, which is MAX-SNP-complete, can be solved in
  polynomial time with any approximation factor $1-\epsilon$ for any
  fixed $\epsilon$ with $0 < \epsilon < 1$. Therefore, this concludes
  that ${\cal P} = {\cal NP}$, which contradicts the assumption ${\cal
    P} \neq {\cal NP}$.
\end{proofAppendix}

\end{document}